\newcommand{\sfrac}[2]{{}^{#1}\mathclap{/}_{#2}}
\newcommand{\decode}{\mathcal{D}}
\newcommand{\code}{\mathcal{C}}
\newcommand{\hilbert}{\mathcal{H}}
\newtheorem{theorem}{Theorem}
\newtheorem{corollary}{Corollary}
\newtheorem{definition}{Definition}
\newtheorem{property}{Property}
\newtheorem{lemma}{Lemma}
\begin{document}

\title{Linear optical logical Bell state measurements with optimal loss-tolerance threshold}

\author{Paul Hilaire}
\affiliation{Huygens-Kamerlingh Onnes Laboratory, Leiden University, P.O. Box 9504, 2300 RA Leiden, The Netherlands}
\affiliation{Quandela, 7 Rue Léonard de Vinci, 91300 Massy, France}
\email{paul.hilaire@quandela.com}

\author{Yaron Castor}
\affiliation{Sorbonne Université, CNRS, LIP6, F-75005 Paris, France}

\author{Edwin Barnes}
\affiliation{Department of Physics, Virginia Tech, Blacksburg, VA 24061, USA}
\affiliation{Virginia Tech Center for Quantum Information Information Science and Engineering, Blacksburg, VA 24601, USA}

\author{Sophia E. Economou}
\affiliation{Department of Physics, Virginia Tech, Blacksburg, VA 24061, USA}
\affiliation{Virginia Tech Center for Quantum Information Information Science and Engineering, Blacksburg, VA 24601, USA}

\author{Frédéric Grosshans}
\affiliation{Sorbonne Université, CNRS, LIP6, F-75005 Paris, France}

\begin{abstract}
Quantum threshold theorems impose hard limits on the hardware capabilities to process quantum information. We derive tight and fundamental upper bounds to loss-tolerance thresholds in different linear-optical quantum information processing settings through an adversarial framework, taking into account the intrinsically probabilistic nature of linear optical Bell measurements. For logical Bell state measurements --- ubiquitous operations in photonic quantum information --- we demonstrate analytically that linear optics can achieve the fundamental loss threshold imposed by the no-cloning theorem even though, following the work of Lee et al., (Phys. Rev. A 100, 052303 (2019)), the constraint was widely assumed to be stricter. We spotlight the assumptions of the latter publication and find their bound holds for a logical Bell measurement built from adaptive physical linear-optical Bell measurements. We also give an explicit even stricter bound for non-adaptive Bell measurements.
\end{abstract}
\maketitle

\section{Introduction}
Photonic quantum technologies benefits from the advantages of photons: being in principe decoherence-free, traveling at light speed, and produceable at high clock rates. These advantages make photons arguably the medium of choice for quantum communications and a strong contender for quantum computing.

The simplest way to manipulate photonic quantum information is through linear optics.
A ubiquitous operation for photonic quantum information processing is the Bell state measurement (BSM)
\cite{Calsamiglia2001, Weinfurter1994, Braunstein1995, Michler1996, Lutkenhaus1999, Vaidman1999}, 
which is the cornerstone of fusion-based quantum computing~\cite{Bartolucci2021}, one of the current most advanced architectures for fault-tolerant linear-optic quantum computing.
Similarly, in quantum communications, they are critical for quantum teleportation \cite{Bennett1993, Vaidman1994, Boschi1998, Pirandola2015}
and for entanglement swapping \cite{Zukowski1993, Pan1998},
for example in quantum repeater protocols, whose purpose is to enable long-distance quantum communications~\cite{Briegel1998, Sangouard2011}. The most advanced quantum repeater schemes~\cite{Muralidharan2016, Jiang2009}, including the recently investigated all-photonic quantum repeaters~\cite{Azuma2015, Ewert2016, Lee2019,Hilaire2021error,Hilaire2021resource, Zhang2022, Niu2022, Bell2022}, are based on quantum error correction and logical BSMs. 

However, due to intrinsic limitations of linear optics, two-photon BSMs are inherently probabilistic with a success rate of at most $50\%$~\cite{Weinfurter1994, Braunstein1995, Michler1996, Lutkenhaus1999, Vaidman1999}. 
While photonic BSMs can be improved beyond this limit --- up to (near) determinism --- through the use of auxiliary resource states~\cite{Grice2011, Ewert2014, Wein2016, Olivo2018}, nonlinear interaction with an atom~\cite{Kim2001, Kim2002}, hyperentanglement~\cite{Kwiat1998, Walborn2003, Schuck2006, Barbieri2007} or squeezing~\cite{Zaidi2013}, these strategies have not been proved useful for loss tolerance.

Yet, we can use quantum error correction to make a logical BSM resistant to loss~\cite{Shor1995, Steane1996, Gottesman1997, Kitaev2003}. 
A quantum error correcting code (QECC) protects a quantum state from random qubit losses occurring with a probability below a threshold value, --- the loss-tolerance threshold --- which is intrinsically linked to the QECC used. 

Determining these loss-tolerance thresholds is crucial, and finding QECCs compatible with linear optics and which have the largest loss thresholds is of the utmost importance as it imposes a hard limit on the tolerable amount of loss of a quantum channel. In quantum communications, this threshold immediately translates into an upper bound on the distance between two nodes in quantum repeater schemes based on quantum error correction. For fault-tolerant quantum computing, it imposes an upper bound on the number of lossy operations that can be made, in a photonic quantum circuit, before the detection of photons. Moreover, the interest in loss-tolerant QECCs extends beyond the scope of photonics as Refs.~\cite{Wu2022, Kubica2022, Kang2022} have recently shown that we can more efficiently deal with some matter qubit errors by using erasure conversion, i.e. converting a qubit computational error into a heralded qubit loss.

In previous work, quantum error correction has already been adapted to a linear-optical setting. In~\cite{Varnava2006}, Varnava et al. proved that single-qubit logical measurements on a tree graph state QECC can be performed with $50 \%$ loss tolerance. In~~\cite{Ralph2005}, Ralph et al. showed how to protect a logical quantum state from loss using a quantum parity code and full linear-optical processing. Very recently, Bell et al.~\cite{Bell2022} have proposed new methods for QECCs based on graph states in a measurement-based setting, together with methods to improve their performances. This measurement-based quantum error correction has strong connections with linear-optical quantum information processing.
Logical BSMs have also been investigated, in particular because of their relevance for quantum communications.
Multiple all-photonic quantum repeater protocols, based on logical BSMs have been investigated~\cite{Sheng2015, Azuma2015, Ewert2016, Hilaire2021error, Niu2022}.
Lee et al.~\cite{Lee2019} upper bounded to $1-2^{-n}$ the success probability of a logical BSM acting on $n$-physical-qubit-encoded logical qubits through linear optical BSMs. This bound is reachable in the absence of losses.
In the presence of photon loss, they have also proposed an upper bound for the loss tolerance thresholds of logical BSMs based on linear optics and the no-cloning theorem. However, in a previous work~\cite{Hilaire2021error}, we have built linear-optical codes which worked around their assumptions and have numerically shown they overcome this bound.

In this paper, we derive fundamental and tight upper bounds on the loss-tolerance thresholds of logical BSMs both for general quantum information processing and when restricting ourselves to linear-optics. We derive these bounds using fundamental results of quantum physics such as the no-cloning theorem \cite{Wootters1982, Dieks1982}, the measurement postulate and by developing an adversarial framework based on quantum error correction. We then show that these fundamental bounds are actually achievable when restricting ourselves to a linear-optical measurement setting, by providing concrete  examples of QECCs with loss tolerance reaching these fundamental bounds. Furthermore, we use our framework to prove other linear-optical tight bounds for logical BSMs that can be implemented with less demanding technological requirements. These tight bounds also directly translate into a fully linear-optical loss-tolerant decoder based on quantum teleportation and with $50\%$ loss thresholds, the theoretically maximum achievable loss-tolerance.

This paper is organized as follows.
In Sec.~\ref{sec_qec}, we introduce briefly concepts of QECC and thresholds for qubit loss, and we present the principal components of our framework.
We investigate how no-go theorems of quantum mechanics impose bounds on these loss thresholds in Sec~\ref{sec_qec_bounds}, and we apply these results to logical BSMs in Sec.~\ref{sec_BSM}.
In Sec.~\ref{sec_LOBSM}, we focus on linear optics and investigate how this setting influences the loss thresholds for logical BSMs, investigating different logical requirements. In Sec.~\ref{subsec_fundamental_limit_lobsm}, we show that linear-optical BSMs has the same fundamental tight upper bound on loss tolerance as for general BSMs and we propose a logical decoder with the same loss threshold as the logical BSMs. Finally, we give an overview of the results and conclude in Sec.~\ref{sec_overview}.

\section{Quantum Error Correction} \label{sec_qec}

Here, we introduce the important concepts on QECCs that we will use in the following, to derive fundamental thresholds on logical linear-optical BSMs (LOBSMs). For a more detailed review on quantum error correction, we suggest Refs.~\cite{Gottesman1997, Lidar2013, Roffe2019}  to the interested readers.

\begin{figure}[tb]
    \centering
    \includegraphics[width=1 \columnwidth]{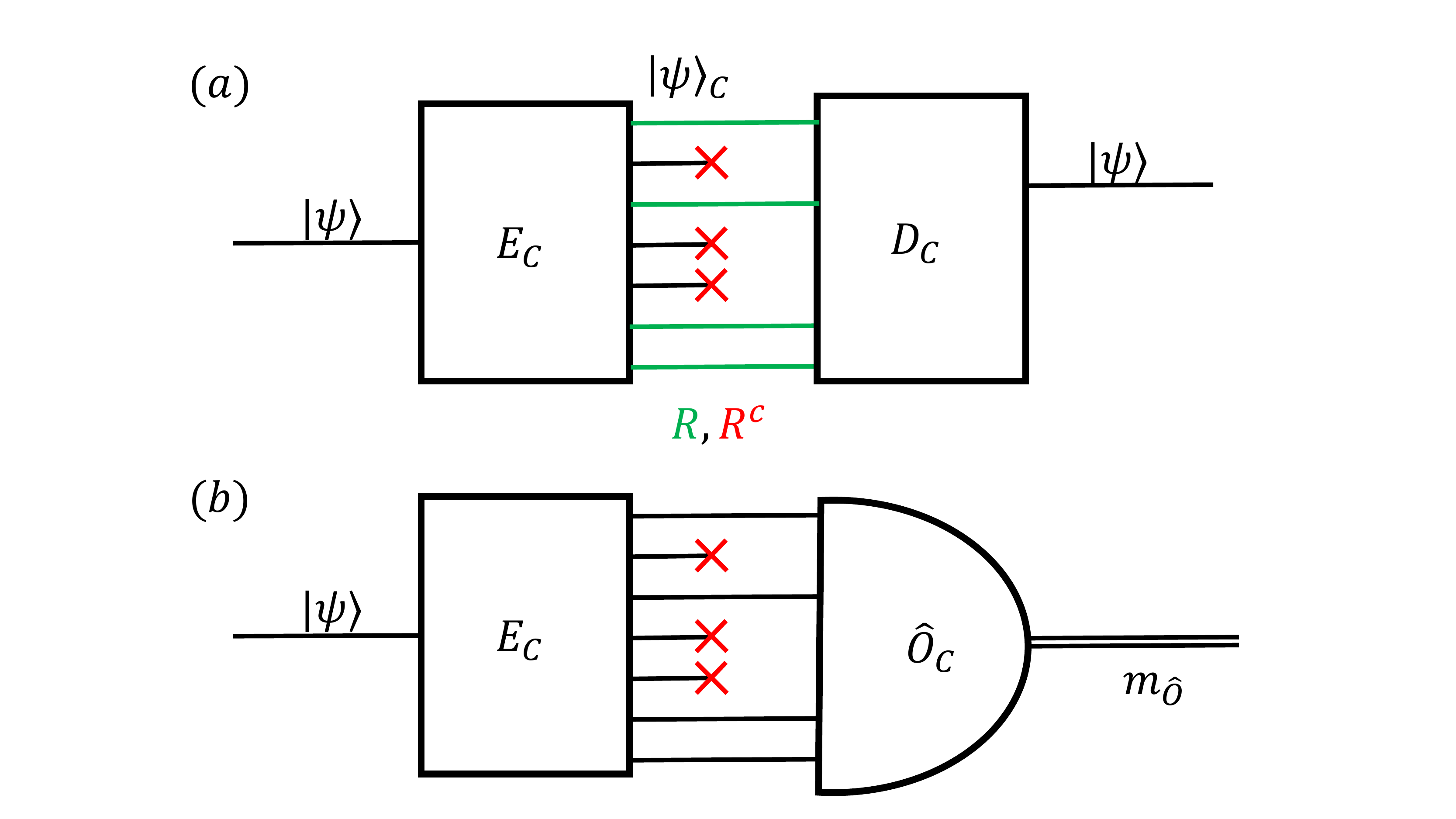}
    \caption{(a) Logical encoding of a quantum state using an encoder $E$ and a decoder $D$ to recover the state. Note that the decoder can still decide the state when it receives some qubit subsets $R \neq N$ (i.e.\ the $R^c$ physical qubit subset is lost).
    (b) Logical measurement of an operator $\hat O$, giving an $m_{\hat O}$ outcome.
    }
    \label{fig_logical}
\end{figure}

Quantum error correction is a strategy used to protect quantum information from errors where we encode one or more logical qubits onto many physical qubits.
While the strategy we present is likely adaptable to any kind of error QECCs can correct, we focus in this article on the erasure channel~\cite{Grassl1997} for simplicity reasons, keeping other error models for future work.
Formally, as illustrated in Fig.~\ref{fig_logical}, we can use a QECC $C$ to protect a state $\ket{\psi}$ by encoding it into a logical state $\ket{\psi_C}$, using its encoder $E_C$.
The logical quantum state is encoded onto a set $N$ of $|N|$ physical qubits.

\begin{definition}
    A perfect encoder $E_{C}$ of a QECC $C$ is an isometry mapping any quantum state $\ket{\psi} \in \hilbert_2^{\otimes |K|} $ describing a $|K|$-qubit state, into a subspace of a larger Hilbert space of $|N|\geq|K|$ qubits: $\dyad{\psi_C} = E_C(\ket{\psi})$ with $\ket{\psi_C}\in \hilbert_2^{\otimes |N|}$, following the encoding $C$.
\end{definition}

We follow the usual convention in the QECC litterature and assume the input state is unknown to the encoder.
Suppose that we only have access to a set $R$ of qubits, for example because the set of qubits $R^c = N \backslash R$ has been lost. In that case, we define an optimal decoder which aims at recovering the encoded quantum state $\ket{\psi}$ from this qubit subset $R$ (as illustrated in Fig.~\ref{fig_logical}):
\begin{definition}
    Let $C$ be a QECC that encodes any quantum state $\ket{\psi}$ onto a set $N$ of physical qubits as $E_C(\ket{\psi})$.
    An optimal decoder $D_C$ receives a subset $R \in N$ of physical qubits
    in the state $\Tr_{R^c}{E_C(\ket{\psi})}$. 
    It deterministically ouputs the $\ket{\psi}$ quantum state if it is theoretically possible to recover it from the subset $R$, or a failure flag $\ket\emptyset$ otherwise:
    $$
    D_C\left(\Tr_{R^c}{E_C(\ket{\psi})}\right)=
    \begin{cases}
        \ket{\psi} & R \in \decode_C,\\
        \ket\emptyset & {\rm otherwise},
    \end{cases}
    $$
    where we call $\decode_C$ the set of all decodable qubit subsets.
\end{definition}

\begin{lemma}
    For any QECC $C$, the probability $P(\decode_C|R)$ of recovering the quantum state from a qubit subset $R$, given the optimal decoder $D_C$ is an increasing function of $R$.
    \label{lemma_decode_r}
\end{lemma}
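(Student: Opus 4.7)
The plan is to interpret the claim set-theoretically: since the optimal decoder $D_C$ is deterministic in the definition just given (it either outputs $\ket\psi$ with certainty or the failure flag $\ket\emptyset$), the ``probability'' $P(\decode_C|R)$ is really the indicator $\mathbbm{1}[R\in\decode_C]$, so saying it is an increasing function of $R$ amounts to showing that $\decode_C$ is upward-closed under inclusion: if $R\in\decode_C$ and $R\subseteq R'\subseteq N$, then $R'\in\decode_C$.

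The key observation is that a decoder with access to more qubits can always simulate a decoder with access to fewer qubits, simply by discarding (partial-tracing) the extra qubits. Concretely, I would fix $R\subseteq R'$ and assume $R\in\decode_C$, so that $D_C$ applied to $\Tr_{R^c} E_C(\ket\psi)$ returns $\ket\psi$ for every input state $\ket\psi$. I would then define a candidate decoding map for the larger subset $R'$ by
\[
D'(\rho_{R'}) \;=\; D_C\bigl(\Tr_{R'\setminus R}\,\rho_{R'}\bigr).
\]
Applying this to $\rho_{R'}=\Tr_{R'^{\,c}} E_C(\ket\psi)$ and using the composition of partial traces $\Tr_{R'\setminus R}\circ\Tr_{R'^{\,c}}=\Tr_{R^c}$ shows that $D'$ recovers $\ket\psi$ deterministically. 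Hence it is theoretically possible to recover $\ket\psi$ from $R'$, so by the definition of the optimal decoder, $R'\in\decode_C$.

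This establishes monotonicity on indicator functions, and therefore on $P(\decode_C|R)$ as well. The only subtle point is conceptual rather than technical: one must justify that the existence of \emph{some} recovery procedure from $R'$ (here, trace-then-$D_C$) implies membership in $\decode_C$, which is guaranteed by the ``optimal'' clause in the definition of $D_C$ — it succeeds whenever recovery is theoretically possible. I do not anticipate any real obstacle; the argument is essentially a one-line data-processing observation wrapped around the definitions.
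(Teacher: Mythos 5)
Your proof is correct and takes essentially the same route as the paper's: both reduce $P(\decode_C|R)$ to the indicator $\delta_{R\in\decode_C}$ and then invoke upward-closure of $\decode_C$ under inclusion. The only difference is that you make explicit, via the trace-then-decode simulation $D' = D_C\circ\Tr_{R'\setminus R}$, the step the paper asserts as self-evident (``if we can retrieve the state from $R$, we can also retrieve it from any $R'\supset R$''), which is a welcome but not essential elaboration.
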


\begin{proof}
    According to the definition of an optimal decoder, $P(\decode_C|R) = \delta_{R \in \decode_C}$ with $\delta_{i} = 1$ (respectively $0$) if $i$ is true (false). If we can retrieve the quantum state from $R \in \decode_C$, we can also retrieve it from any $R' \supset R$. Therefore, $\forall R' \supset R, P(\decode_C|R') \geq P(\decode_C|R)$.
    \end{proof}

\subsection{Loss channel}

We focus here on loss channels that affect identically all the physical qubits from a QECC.
\begin{definition}
    A loss channel $C(\eta)$ is characterized by its loss probability, $\varepsilon$, or equivalently by its transmission efficiency $\eta = 1 - \varepsilon$, affecting each physical qubits. After passing through a loss channel $C(\eta)$ only a subset $R$ of the input physical qubit set $N$ is transmitted with probability, $P(N \to R | \eta)$:
    $$
    \forall R \subseteq N, \forall \eta  \in [0, 1], P(N \to R | \eta) = \eta^{|R|} (1 - \eta)^{|N|-|R|}.
    $$
\end{definition}

In the following, we will use $\varepsilon$ and $\eta$, to denote respectively loss and transmission efficiencies. 
A linear-optical implementation of a single-qubit lossy channel of transmission $\eta$ is a beamsplitter with the same transmission efficiency. 
Therefore, a lossy channel $C(\eta)$ can be also represented as a beamsplitter affecting each physical qubit as shown in Fig.~\ref{fig_channel_loss}(a).

\begin{figure}[!ht]
    \centering
    \includegraphics[width=.6\columnwidth]{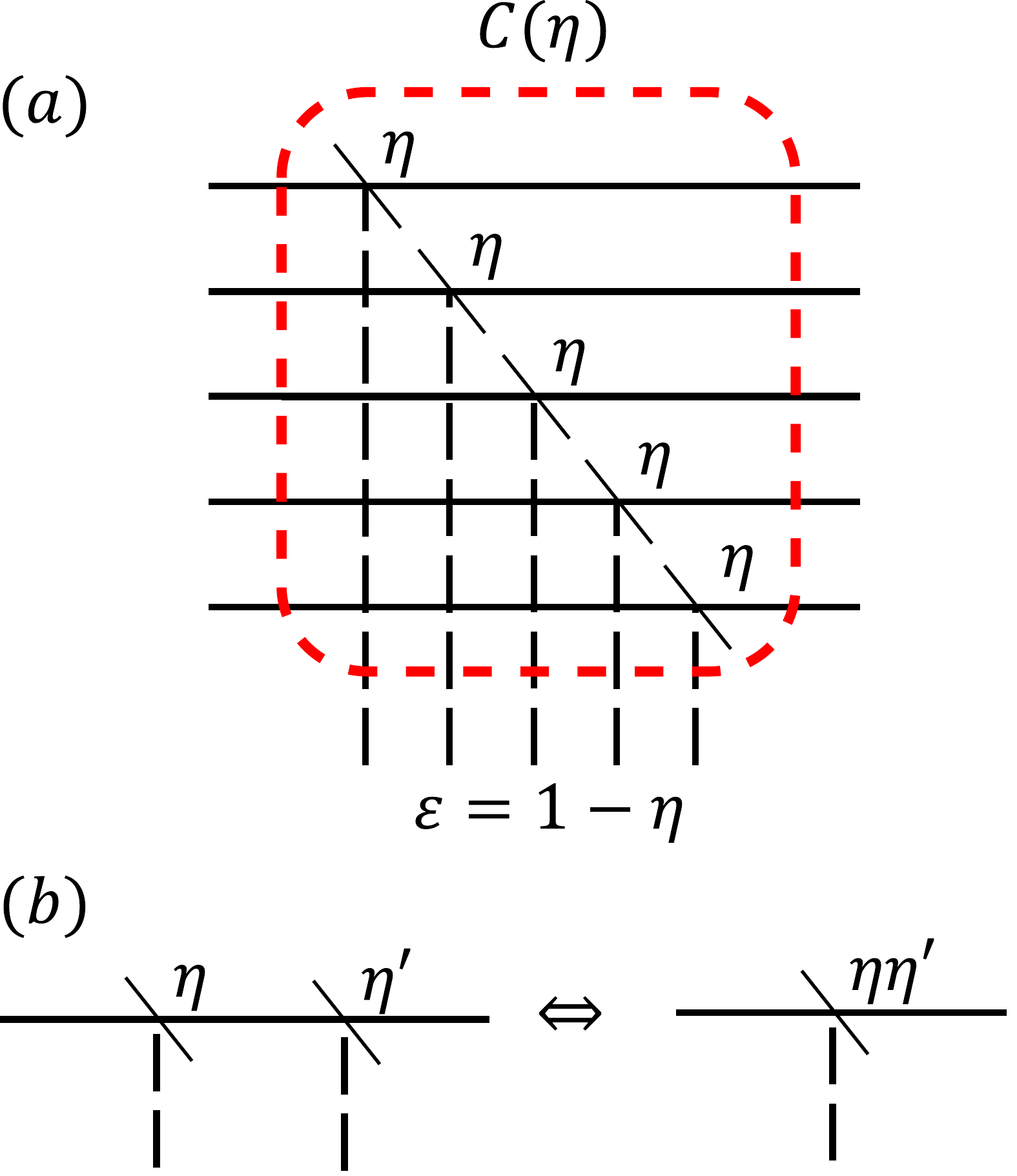}
    \caption{(a) Lossy channel $C(\eta)$ represented using multiple beamsplitters with reflectivity $\eta$. (b) Chain rule represented with beamsplitters.
    }
    \label{fig_channel_loss}
\end{figure}

\begin{property}
    The sum over all the possible output subsets $R$ of  the probabilities $P(N \to R|\eta)$ is unity:
    $$\forall \eta \in [0, 1], \forall N, \sum_{R \subseteq N} P(N \to R|\eta) = 1$$
    \label{prop_1}
\end{property}

\begin{property}     \label{prop_2}
    Chain rule of loss channels: $C(\eta_2 \eta_1) = C(\eta_2) \circ C(\eta_1)$, which gives the relation:
    \begin{multline}
         \forall \eta_1, \eta_2 \in [0, 1]^2, \forall N, \forall R \subseteq N, \nonumber \\
        P(N \to R|\eta_2 \eta_1) = \smashoperator{\sum_{R \subseteq R' \subseteq N}} P(N \to R' |\eta_1) P(R' \to R | \eta_2)  
    \end{multline}
\end{property}
Property~\ref{prop_2} is well understood in the beamsplitter representation of a lossy channel as illustrated inFig.~\ref{fig_channel_loss}(b).

Using this lossy channel, we can derive a corollary of Lemma.~\ref{lemma_decode_r}:

\begin{corollary}
    For all QECCs $C$, the probability to decode the quantum information given a physical qubit detection probability $\eta$, denoted as $P(\decode_C|\eta)$, is an increasing function of $\eta$.
\end{corollary}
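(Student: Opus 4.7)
The plan is to combine Lemma~\ref{lemma_decode_r} (adding qubits cannot decrease the decoding probability) with the chain-rule Property~\ref{prop_2} (a lossy channel can be decomposed as a concatenation of lossy channels). Intuitively, a channel with transmission $\eta_1 \leq \eta_2$ is obtained from a channel with transmission $\eta_2$ by applying additional loss afterwards; since additional loss can only strip qubits away from the subset $R$, and fewer qubits can never help, monotonicity of $P(\decode_C|\eta)$ in $\eta$ should follow.

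Concretely, I would first express the overall decoding probability as
\begin{equation}
P(\decode_C|\eta) = \sum_{R\subseteq N} P(N\to R|\eta)\, P(\decode_C|R),
\end{equation}
then fix $0\leq \eta_1\leq \eta_2\leq 1$ and set $\eta' = \eta_1/\eta_2 \in [0,1]$ so that $\eta_1 = \eta' \eta_2$ (with the edge case $\eta_2=0$ handled separately as trivial). Applying Property~\ref{prop_2} to rewrite $P(N\to R|\eta_1)$ as a sum over intermediate subsets $R'$ with $R\subseteq R'\subseteq N$, and exchanging the order of summation, gives
\begin{equation}
P(\decode_C|\eta_1) = \sum_{R'\subseteq N} P(N\to R'|\eta_2) \sum_{R\subseteq R'} P(R'\to R|\eta')\, P(\decode_C|R).
\end{equation}
Lemma~\ref{lemma_decode_r} guarantees $P(\decode_C|R)\leq P(\decode_C|R')$ whenever $R\subseteq R'$, so the inner sum is upper-bounded by $P(\decode_C|R')\sum_{R\subseteq R'}P(R'\to R|\eta')$, which equals $P(\decode_C|R')$ by Property~\ref{prop_1}. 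Substituting back yields $P(\decode_C|\eta_1)\leq P(\decode_C|\eta_2)$, which is the claim.

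No step here looks genuinely hard; the only subtle point is bookkeeping, namely ensuring that the decomposition $\eta_1 = \eta' \eta_2$ is well-defined and that swapping the two summations $\sum_R \sum_{R'\supseteq R}$ into $\sum_{R'} \sum_{R\subseteq R'}$ is done correctly so that Property~\ref{prop_1} can be invoked on the inner sum. This is the main (minor) obstacle, and handling the $\eta_2=0$ corner case separately dispatches it cleanly.
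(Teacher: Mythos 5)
Your proof is correct and follows essentially the same route as the paper's: both decompose the lossier channel via the chain rule (Property~\ref{prop_2}), swap the order of summation, and combine Lemma~\ref{lemma_decode_r} with the normalization of Property~\ref{prop_1} to bound the inner sum. The only cosmetic difference is that the paper restricts the sums to $R \in \decode_C$ early on (exploiting that $P(\decode_C|R)$ is an indicator), whereas you keep $P(\decode_C|R)$ as a weight and bound it directly, with explicit handling of the $\eta_2=0$ edge case.
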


\begin{proof}
    We first note that
\begin{align}
    P(\decode_C|\eta) &= \smashoperator{\sum_{R \subset N}} P(N \to R|\eta) P(\decode_C|R) \nonumber\\
    & = \smashoperator{\sum_{R \in \decode_C}} P(N \to R|\eta) \nonumber \\
    \intertext{Let us now consider, for each $R$, its supersets $R' \supseteq R$. For $\eta<\eta'$, }
    P(\decode_C|\eta)
    & = \smashoperator{\sum_{\substack{R, R' \in \decode_C \\ R \subseteq R' \subseteq N}}}  P(N \to R'| \eta') P(R' \to R| \eta/\eta')  \label{eq:RprimeOK} \\
    & = \smashoperator{\sum_{R' \in \decode_C}} P(N \to R'|\eta') 
        \Bigg(\smashoperator[r]{\sum_{\substack{R \in \decode_C \\ R\subseteq R'} }} P(R' \to R|\eta/\eta')\Bigg) \nonumber \\
    & \leq \smashoperator{\sum_{R' \in \decode_C}} P(N \to R'|\eta') = P(\decode_C|\eta'),   \label{eq:increasingeta}   
\end{align}
which ensures that $P(\decode_C|\eta)$ is increasing with $\eta$.
To obtain this result, we have used Lemma.~\ref{lemma_decode_r} to ensure that $R'$ is also  in $\decode_C$ in Eq,~(\ref{eq:RprimeOK}) and that the sum in parentheses is below 1 to obtain the inequality in Eq.~(\ref{eq:increasingeta}). 
\end{proof}

\subsection{Loss threshold}
We denote by $\code$ a family of QECCs, that is a (potentially infinite) set of QECCs.
An important result of quantum error correction is that there often exists  families of QECCs, $\code$,  which protect quantum information with an arbitrarily high success probability as long as the error rate is below some threshold value~\cite{Aharonov1997, Knill1998, Kitaev2003}. 
Interested by qubit losses, we define for each family of QECCs $\code$  the maximum qubit loss probability $\varepsilon_{\decode}$ such that we can use a code in $\code$ to correct it.

\begin{definition} \label{def_threshold}
    A family of QECCs, $\code$, has a loss threshold $\varepsilon_\decode$ if~{\normalfont \cite{Varnava2006,Stace2010}}:
\begin{equation*}
    \forall \varepsilon' > 0, \forall \varepsilon < \varepsilon_\decode, \exists C \in \code, P(\decode_C | 1 -\varepsilon) > 1 - \varepsilon',
\end{equation*} 
with $\decode_C$ the set of all decodable qubit subsets of a decoder $D_C$, and $\varepsilon$ the single physical qubit loss probability.
\end{definition}

One important remark is that for a specific QECC $C \in \code$, the probability $P(\decode_C | 1 -\varepsilon)$ is \emph{strictly below one} in the presence of losses $\varepsilon > 0$.  What this definition means is that, as long as the single-qubit loss is below a threshold value $\varepsilon_\decode$, it is always possible to find a QECC $C \in \code$, such that $P(\decode_C | 1 -\varepsilon)$ is arbitrarily close to 1 (but still strictly below 1).

\subsection{Measurements of observables}

We can also use a QECC to perform the measurement of an observable $\hat O$ which yields the measurement outcome $m_{\hat{O}}$ as shown in Fig.~\ref{fig_logical}.
We can decompose $\hat O$ with its spectral decomposition~\cite{Nielsen2000}: 
$$\hat O = \sum_{m_{\hat O}} m_{\hat O} \hat \Pi_{m_{\hat O}},$$
where $\hat \Pi_{m_{\hat O}}$ is the projector onto the $m_{\hat O}$-valued eigenspace of $\hat O$.

\begin{definition}
    We define a loss-tolerant threshold for a specific measurement $\varepsilon_{\hat O}$, similarly to the general loss-tolerant threshold:
    \begin{equation*}
        \forall \varepsilon' > 0, \forall \varepsilon < \varepsilon_{\hat O}, \exists C \in \code, P(\hat O_C | 1 - \varepsilon) > 1 - \varepsilon',
    \end{equation*} 
    where $P(\hat O_C | 1 - \varepsilon)$ is the probability to have the correct measurement outcome for the measurement of the logical operator $\hat O_C$, given a loss probability $\varepsilon$, using a code $C\in \code$.
\end{definition}

For practical reasons, we are often interested in the capability of the same code to allow the measurement of different observables in $\{O_k\}_k$, and/or to fully decode the state through $\decode$. 
Let $S$ be the set of operators we're interested in : $S=\{O_k\}_k$ or $S=\{\decode\}\cup\{O_k\}_k$.
We call $S$ a context and define loss tolerant thresholds $\varepsilon_i^S$, for each operators $i$ in the context $S$:
\begin{definition} \label{def_context}
    We define loss-tolerant thresholds for a specific context $S$, $\varepsilon_{i}^{S}$, $\forall i \in S$, such that:
    \begin{equation*}
        \forall \varepsilon' > 0, \forall \varepsilon < \varepsilon_{i}^S, \exists C \in \code, \forall i \in S,  P(i_C | 1 -\varepsilon) > 1 - \varepsilon'.
    \end{equation*} 
\end{definition}
Using that definition, we can for example investigate the loss-tolerance thresholds for the measurement of an operator $\hat O$ and $\decode$, $S=\{\hat O, \decode\}$. The derived loss thresholds $\varepsilon_{\hat O}^{S}, \varepsilon_{\decode}^S$ may now exhibit interdependence since we should find a unique code $C$ in $\code$ for which the conditions $P(\hat O |1 - \varepsilon) > 1 - \varepsilon'$ and $P(\decode_C |1 - \varepsilon) > 1 - \varepsilon'$ are simultaneously met.

The objective of this paper is to obtain these fundamental operator measurement thresholds in the case of BSMs built with different constraints.

\section{Fundamental limitations to quantum error correction} \label{sec_qec_bounds}

In the following, we show how fundamental theorems and axioms of quantum mechanics constrain the maximum performances of general QECCs.
All the results from this section are well established in the community, but we rederive them by introducing an adversarial framework where we consider that the qubit set $R^c$ that is lost through a lossy channel $C(\eta)$, is collected by an adversary. 
This framework is closely related to the seminal work of Cleve et al.\@ \cite{Cleve1999} on quantum secret sharing. 
We will later use this framework in a linear-optical setting to derive new fundamental results.
In the beamsplitter analogy for loss channels, this corresponds to actually collecting the physical qubits $R^c$ that were not transmitted through the channel, i.e.\@ beamsplitter (see Fig.~\ref{fig_channel_loss}).  

Note that the results that we derive here are completely general and are thus not limited to logical BSMs nor linear-optical implementations.

Indeed, while the loss thresholds necessarily depend on the family of codes considered, here, we are interested in fundamental and general upper bounds for the loss thresholds,
valid for any family of QECCs. The fundamental upper bounds derived in this section (and throughout this paper) are valid for \emph{any} QECCs.

\begin{figure}[!ht]
    \centering
    \includegraphics[width=1 \columnwidth]{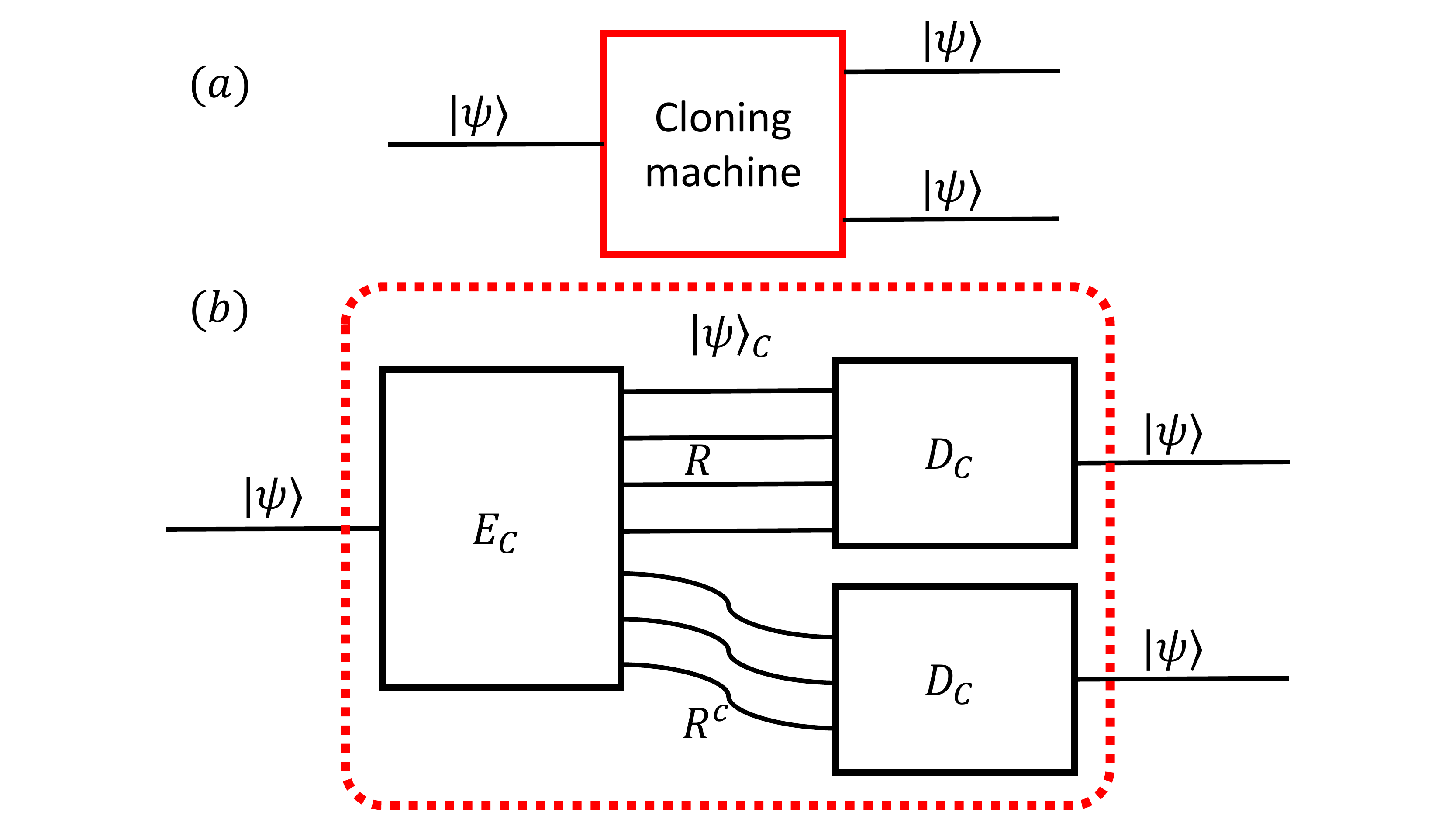}
    \caption{(a) Schematic of a non-physical cloning machine capable of reproducing any quantum states and (b) an implementation  using quantum error correction.
    }
    \label{fig_no_cloning}
\end{figure}

\subsection{No-cloning theorem}
We start by rederiving the known result  that the best loss-tolerance threshold for QECC is at most $\sfrac12$. This result originates from the no-cloning theorem~\cite{Wootters1982, Dieks1982} and will serve as a simple example of how to use our adversarial framework.
The theorem \ref{th_no_cloning_bound} is actually a straightforward extension of theorem 2 of \cite{Cleve1999} to
probabilistic losses and families of codes, and the beginning of its proof is taken from \cite{Cleve1999}.

The no-cloning theorem states that an unknown arbitrary quantum state cannot be cloned. Therefore, a ``cloning machine'' such as the one presented in Fig.~\ref{fig_no_cloning} is not physical and cannot be implemented. To understand the fundamental implications of the no-cloning theorem for QECC, we will devise a non-physical cloning machine based on QECC and use this impossibility result to derive the bound on the loss-tolerant threshold.

\begin{theorem} \label{th_no_cloning_bound}
    Due to the no-cloning theorem, the maximum amount of loss that any family of QECCs $\code$ can tolerate is strictly below $\sfrac12$:
    $$ \forall \code, \varepsilon_\decode \leq \sfrac12.$$
\end{theorem}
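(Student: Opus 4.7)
The plan is to argue by contradiction: assume $\varepsilon_\decode > 1/2$ and build an operational cloning protocol, which is forbidden by the no-cloning theorem. The adversarial framework fits this scenario perfectly: the beamsplitter picture of Fig.~\ref{fig_channel_loss} lets me treat the reflected qubits $R^c$ as physically available to an adversary, so whenever both halves of some bipartition $N = R \sqcup R^c$ are decodable with the \emph{same} code $C$, two copies of the unknown $\ket{\psi}$ can be produced from a single encoding, realising the forbidden machine of Fig.~\ref{fig_no_cloning}.

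First I would fix $\eta$ satisfying $1 - \varepsilon_\decode < \eta \leq 1/2$, which is possible precisely because $\varepsilon_\decode > 1/2$. Choosing any error tolerance $\varepsilon' < 1/2$ and applying Definition~\ref{def_threshold} with loss probability $1 - \eta < \varepsilon_\decode$ then yields a code $C \in \code$ with $P(\decode_C \mid \eta) > 1 - \varepsilon'$. Since $1 - \eta \geq \eta$, the monotonicity corollary promotes this inequality to $P(\decode_C \mid 1 - \eta) > 1 - \varepsilon'$ for the very same $C$, so that code would also decode when the effective transmission equals the adversary's share $1 - \eta$.

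Next I would identify $P(\decode_C \mid 1 - \eta)$ with the probability that the adversary's subset $R^c$ lies in $\decode_C$: since $R$ contains each qubit independently with probability $\eta$, its complement $R^c$ contains each qubit independently with probability $1 - \eta$, matching the definition of $P(\decode_C \mid 1 - \eta)$. A union bound on this shared probability space then gives
\begin{equation*}
P\bigl(R \in \decode_C \text{ and } R^c \in \decode_C\bigr) \;\geq\; 1 - 2\varepsilon' \;>\; 0,
\end{equation*}
so at least one concrete bipartition $N = R \sqcup R^c$ admits both halves simultaneously in $\decode_C$. Applying $E_C$ to an unknown $\ket{\psi}$ and running $D_C$ separately on $R$ and on $R^c$ then outputs two copies of $\ket{\psi}$, contradicting the no-cloning theorem and forcing $\varepsilon_\decode \leq 1/2$.

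The delicate step, I expect, is bridging the asymptotic and probabilistic threshold statement to the deterministic existence of a single bipartition where both halves decode. The union bound achieves this passage, but only when the same code $C$ works for both the legitimate receiver and the adversary; that in turn hinges on the monotonicity corollary together with the choice $\eta \leq 1/2$, which ensures the adversary's effective transmission $1-\eta$ sits on the ``easier'' side of the threshold than the receiver's $\eta$. The remaining manipulations amount to formalities paralleling the Cleve et al.\@ quantum-secret-sharing argument.
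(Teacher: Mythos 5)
Your proof is correct, but it closes the argument in the opposite direction from the paper's. Both arguments share the same physical core --- two decoders fed the complementary shares $R$ and $R^c$ of a single encoding would realize the forbidden cloner of Fig.~\ref{fig_no_cloning}, combined with the distributional symmetry $P(N \to R|\eta) = P(N \to R^c|1-\eta)$. The paper turns no-cloning into a probabilistic statement: from the pointwise constraint $P(\decode_C|R) + P(\decode_C|R^c) \leq 1$, valid for every bipartition, it averages over the loss channel to obtain $P(\decode_C|1-\varepsilon) + P(\decode_C|\varepsilon) \leq 1$ (Eq.~\eqref{eq_no_cloning_th}), which holds for every code and every $\varepsilon$, and then contradicts Def.~\ref{def_threshold} at the symmetric point $\varepsilon = \sfrac12$. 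You instead turn the probabilistic threshold statement into a deterministic one: from $P(\decode_C|\eta) > 1-\varepsilon'$ and $P(\decode_C|1-\eta) > 1-\varepsilon'$, a union bound on the common probability space exhibits a single concrete bipartition with both halves in $\decode_C$, which instantiates the cloner and yields the contradiction. Both routes are sound, and your identification of $P(\decode_C|1-\eta)$ with the probability that the adversary's share decodes is exactly the right bridge. Your packaging costs two dependencies the paper's proof avoids: the monotonicity corollary (to transfer the bound from $\eta$ to $1-\eta$) and the union bound; note you could drop the monotonicity step entirely by choosing $\eta = \sfrac12$ exactly, where $R$ and $R^c$ are identically distributed --- which is precisely the point at which the paper evaluates its inequality. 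What the paper's organization buys is the standalone inequality \eqref{eq_no_cloning_th}, reused verbatim as the template for the measurement-postulate bounds (Eqs.~\eqref{eq_single_code_measurement} and \eqref{eq_p_oo}); what yours buys is a fully operational contradiction --- an explicit cloning device on a concrete qubit bipartition --- which makes the role of the no-cloning theorem more vivid.
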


Note that, in Def.~\ref{def_threshold}, we have considered $\varepsilon < \varepsilon_\decode$, therefore, having the bound $\varepsilon_\decode \leq \sfrac12$ implies that it is impossible to decode a logical quantum state with arbitrarily high success probability in the presence of single-qubit loss probability, $\varepsilon$, \emph{equal or above} $50 \%$.

\begin{proof}
    
The non-physical cloning machine based on a QECC $C$ that we consider is represented in Fig.~\ref{fig_no_cloning}(b) and is composed of two decoders $D_C$. They respectively receive a subset $R$ and $R^c$ of the physical qubits. We know that such a cloning machine cannot work so that if $R \in \decode_C$, then $R^c \not\in \decode_C$.
We therefore have $$\forall C \in \code, \forall R \subset N, P(\decode_C | R) + P(\decode_C|R^c) \leq 1.$$

If we consider that each physical qubit has a probability $\eta = 1 -\varepsilon$ to go to the first decoder and $\varepsilon$ to go to the second decoder, we find that:
$$\forall C \in \code, \smashoperator{\sum_{R \subset N}} P(N \to R | \eta) (P(\decode_C | R) + P(\decode_C|R^c)) \leq 1,$$
since $\sum_{R \subset N} P(N \to R | \eta) = 1$.
Because $P(\decode_C | \eta) = \sum_{R \subset N} P(N \to R | \eta) P(\decode_C | R)$
 and $P(N \to R |\eta) = P(N \to R^c |1 - \eta)$, we therefore have
\begin{equation}
    \forall \varepsilon \in [0, 1], \forall C \in \code,  P(\decode_C|1 - \varepsilon) + P(\decode_C| \varepsilon) \leq 1
    \label{eq_no_cloning_th}
\end{equation}
Having $\varepsilon_\decode > \sfrac12$ for a class of QECCs $\code$ would imply, by Def.~\ref{def_threshold}  that there is a code $C \in \code$ such that  $P(\decode_C|\sfrac12)$ is arbitrarily close to one, say $P(\decode_C|\sfrac12)> \sfrac12$. This would contradict Eq~\eqref{eq_no_cloning_th}, and thus by contradiction $\varepsilon_\decode \leq \sfrac12$.

There actually exist QECCs for which we reach this loss-tolerance threshold limit such as surface codes~\cite{Kitaev2003, Stace2009}. Therefore, the $\varepsilon_\decode \leq \sfrac12$ bound is tight.  
\end{proof}

\subsection{Measurement postulate}

\begin{figure}[!htb]
    \centering
    \includegraphics[width=1\columnwidth]{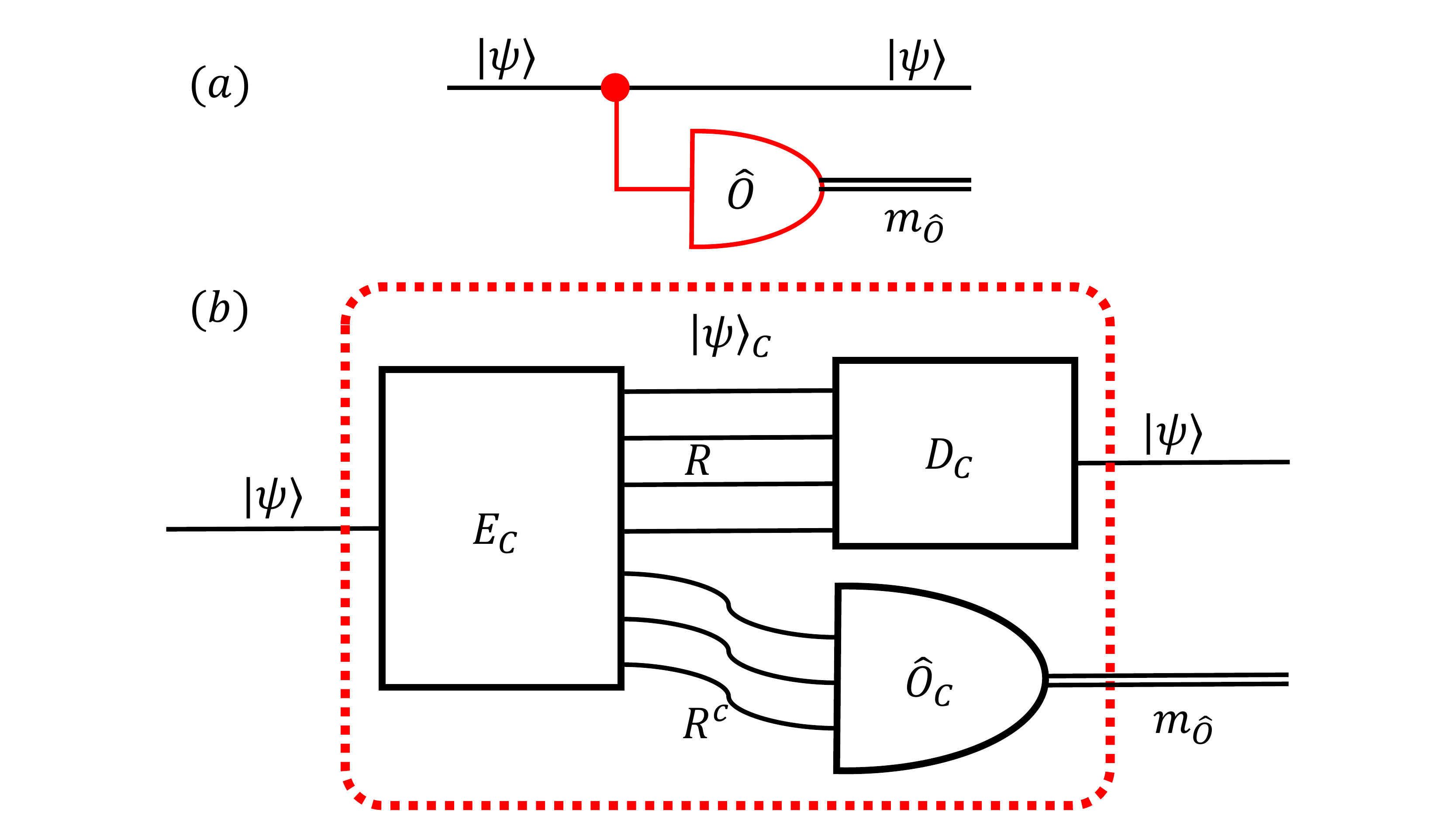}
    \caption{(a) Non-physical non-projective measurement machine capable of measuring the operator $\hat O$ while recovering the arbitrary quantum state $\ket{\psi}$ and (b) an implementation using quantum error correction.
    }
    \label{fig_no_measurement}
\end{figure}

\begin{figure}[!htb]
    \centering
    \includegraphics[width=1\columnwidth]{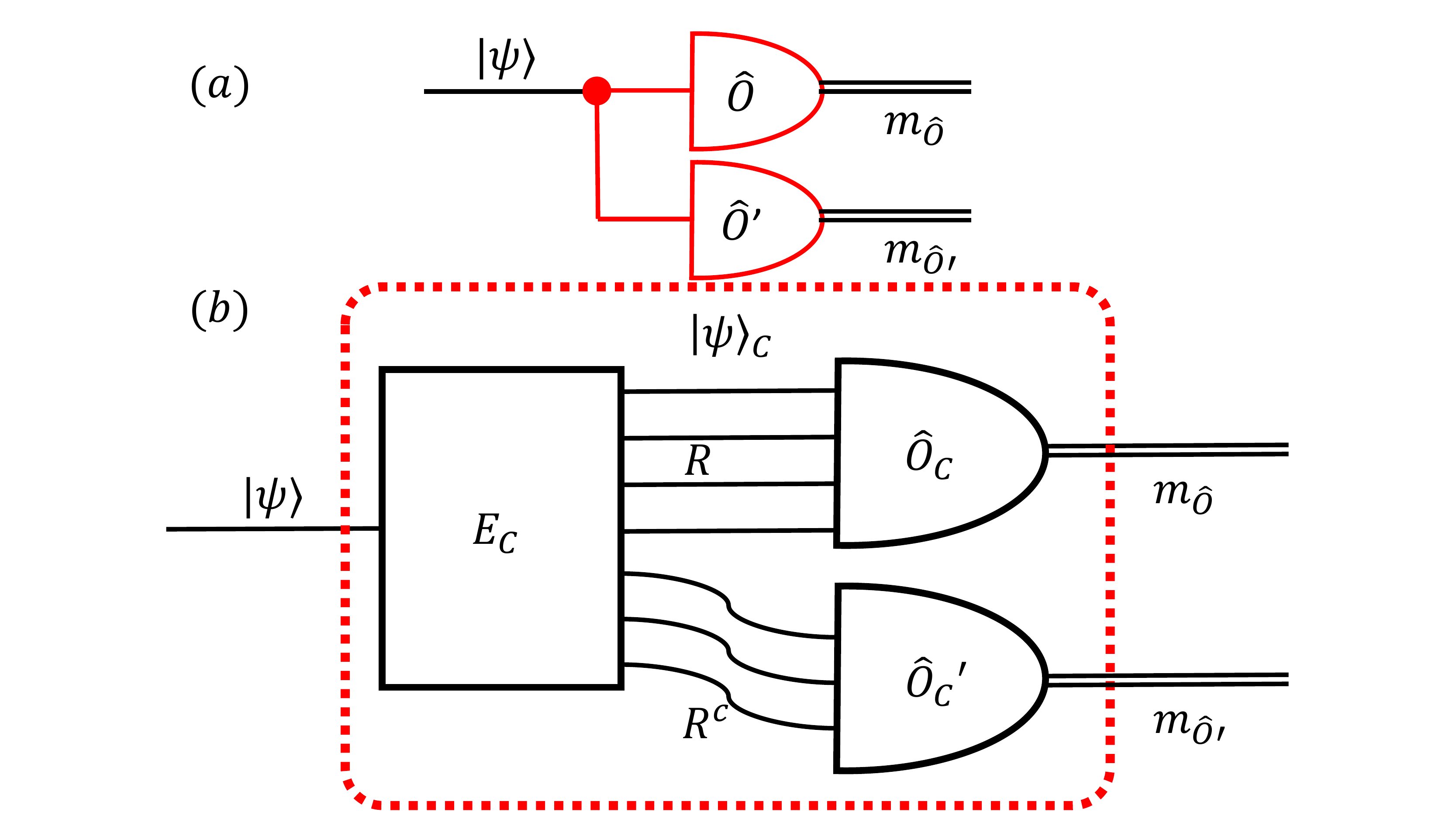}
    \caption{(a) Non-physical measurement machine capable of measuring non-commuting operators $\hat O$ and $\hat O'$ and (b) an implementation using quantum error correction.
    }
    \label{fig_no_measurement_2}
\end{figure}

The measurement postulate of quantum mechanics states that, after the measurement of an observable $\hat O$, the system's wave-function collapses, and (i) cannot recover the initial state $\ket{\psi}$. 
Moreover, (ii) if two observables $\hat{O}$, $\hat O'$ do not commute ---$[\hat{O}, \hat O']\neq0$---, they cannot be measured simultaneously.  
As a result measurement apparata such as the ones depicted in Fig.~\ref{fig_no_measurement} and \ref{fig_no_measurement_2} are not physically allowed, which immediately translates into bounds on loss tolerance.
If a decoder receives a subset $R$ of the physical qubits and the measurement device for $\hat O$ logical measurements receives the $R^c$ qubits, given (i) they cannot simultaneously succeed. This leads to the result:

\begin{theorem}
    The measurement postulate implies two bounds on the respective loss tolerance of the decoder and operator measurement devices:
    \begin{equation}
        \forall \code, \forall \hat O, \varepsilon_\decode^{\{\decode, \hat O\}} + \varepsilon_{\hat O}^{\{\decode, \hat O\}} \leq 1.
        \label{eq_code_measurement}
    \end{equation}
    and 
    \begin{equation}
        \forall \code,  \forall [\hat O, \hat O'] \neq 0, \varepsilon_{\hat O}^{\{\hat O, \hat O'\}} + \varepsilon_{\hat O'}^{\{\hat O, \hat O'\}} \leq 1.
        \label{eq_measurement_comm}    
    \end{equation}
\end{theorem}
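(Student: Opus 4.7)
The plan is to mirror the structure of Theorem~\ref{th_no_cloning_bound}, simply replacing the two-decoder cloning apparatus of Fig.~\ref{fig_no_cloning}(b) by the non-physical devices of Fig.~\ref{fig_no_measurement}(b) and Fig.~\ref{fig_no_measurement_2}(b). In each case I would first establish a pointwise inequality for an arbitrary bipartition $R \sqcup R^c = N$ of the physical qubits of a fixed code $C\in\code$, then average with the binomial weights $P(N\to R|\eta)$, and finally convert the resulting one-code bound into a threshold bound by the contradiction argument used at the end of Theorem~\ref{th_no_cloning_bound}.

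For Eq.~\eqref{eq_code_measurement}, I would send $R$ to a decoder $D_C$ and $R^c$ to a logical $\hat O_C$-measurement device. Joint success of both devices would output $\ket{\psi}$ together with a correct Born-distributed outcome $m_{\hat O}$, i.e.\ a state-preserving measurement of $\hat O$, which postulate~(i) forbids. Hence
$$P(\decode_C|R) + P(\hat O_C|R^c) \leq 1.$$
Summing over $R\subseteq N$ against $P(N\to R|\eta)$, and using $P(N\to R|\eta) = P(N\to R^c|1-\eta)$ as in Theorem~\ref{th_no_cloning_bound}, gives
$$\forall \varepsilon\in[0,1],\ \forall C\in\code,\ P(\decode_C|1-\varepsilon) + P(\hat O_C|\varepsilon) \leq 1.$$
The threshold bound then follows by assuming $\varepsilon_\decode^{\{\decode,\hat O\}} + \varepsilon_{\hat O}^{\{\decode,\hat O\}} > 1$, picking $\varepsilon_1<\varepsilon_\decode^{\{\decode,\hat O\}}$ and $\varepsilon_2<\varepsilon_{\hat O}^{\{\decode,\hat O\}}$ with $\varepsilon_1+\varepsilon_2>1$, invoking Def.~\ref{def_context} to extract a single code $C$ for which both $P(\decode_C|1-\varepsilon_1)$ and $P(\hat O_C|1-\varepsilon_2)$ exceed $1-\varepsilon'$, and evaluating the pointwise inequality at $\varepsilon=\varepsilon_1$; since $\varepsilon_1 > 1-\varepsilon_2$, monotonicity (the direct analog of Corollary~1 applied to $P(\hat O_C|\eta)$) yields $P(\hat O_C|\varepsilon_1)\geq P(\hat O_C|1-\varepsilon_2)>1-\varepsilon'$, giving a sum $>2(1-\varepsilon')>1$, a contradiction for small $\varepsilon'$.

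For Eq.~\eqref{eq_measurement_comm}, the same machinery applies after routing $R$ to the $\hat O_C$-device and $R^c$ to the $\hat O'_C$-device. Because these devices act on disjoint tensor factors they commute as physical operations, so joint success produces a genuine joint outcome $(m_{\hat O},m_{\hat O'})$ whose marginals match the Born distributions of $\hat O$ and $\hat O'$ on every encoded state. Such a joint distribution cannot exist when $[\hat O,\hat O']\neq 0$ (this is postulate~(ii), or equivalently the standard obstruction to hidden joint distributions for incompatible observables), which gives
$$P(\hat O_C|R) + P(\hat O'_C|R^c) \leq 1,$$
and the threshold bound follows by the identical averaging-plus-contradiction argument.

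The main obstacle is the second pointwise inequality: the first one is an immediate rephrasing of postulate~(i), but translating postulate~(ii) into a clean $\leq 1$ bound requires being careful about what ``success'' means for a measurement device. One has to argue that a device succeeding with probability one on every $\ket{\psi_C}$ implements a true projective measurement of $\hat O_C$ (in outcome distribution), and that two such devices acting on disjoint qubit sets would jointly realize an incompatible pair of projective measurements on the logical qubit. Once this is stated cleanly, the rest of the proof is essentially a copy of Theorem~\ref{th_no_cloning_bound} with the cloner replaced by the appropriate non-physical measurement machine.
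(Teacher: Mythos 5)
Your proposal is correct and takes essentially the same route as the paper's own proof: both derive the pointwise inequalities $P(\decode_C|1-\varepsilon)+P(\hat O_C|\varepsilon)\leq 1$ and $P(\hat O_C|1-\varepsilon)+P(\hat O'_C|\varepsilon)\leq 1$ from the non-physicality of the machines in Figs.~\ref{fig_no_measurement} and~\ref{fig_no_measurement_2} by the bipartition-plus-binomial-averaging argument of Theorem~\ref{th_no_cloning_bound}, and then pass from the single-code inequality to the threshold statement. Your explicit contradiction with $\varepsilon_1,\varepsilon_2$ and monotonicity is just a slightly more spelled-out version of the paper's limiting argument, and your remark on the joint measurability of non-commuting sharp observables only adds rigor to the step the paper states in passing.
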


\begin{proof}[Proof of Eq.~\eqref{eq_code_measurement}]
    Since a circuit such as the one depicted in Fig.~\ref{fig_no_measurement} is not physical, we can follow the
    reasoning in the proof of Theorem \ref{th_no_cloning_bound}:
    \begin{equation}
        \forall \varepsilon> 0, \forall C \in \code, P(\decode_C|1 - \varepsilon) + P(\hat O_C |\varepsilon) \leq 1.
        \label{eq_single_code_measurement}
    \end{equation}

    If we consider a class of QECCs $\code$ which has a loss-tolerance threshold $\varepsilon_\decode^{\{\decode, \hat O\}}$, we can find codes $C \in \code$ for which $P(\decode_C|1- \varepsilon) \to 1$, $\forall \varepsilon < \varepsilon_\decode^{\{\decode, \hat O\}}$. 
    Therefore, for each of these codes $C$, Eq.~(\ref{eq_single_code_measurement}) leads to 
    $1 - P(\decode_C|1- \varepsilon) \ge P(\hat O_C | \varepsilon) \to 0$ and consequently 
    $1 - \varepsilon_{\decode}^{\{\decode, \hat O\}} \geq \varepsilon_{\hat O}^{\{\decode, \hat O\}}$.
    We therefore have proved the first inequality Eq.~\eqref{eq_code_measurement} between the loss-tolerant threshold and the loss-tolerant $\hat O$ measurement threshold.
\end{proof}
    
    An important subtlety is that the loss-tolerance threshold for a measurement is not strictly bounded by $\sfrac12$,
    as it is the case for a QECC loss-tolerant decoder, but by $1 - \varepsilon_\decode^{\{\decode, {\hat O}\}}$ which can be arbitrarily close to $1$ if we consider a class of codes for which $\varepsilon_\decode$ is arbitrarily close to $0$.

    For example, the class of repetition codes, for which $E(\ket{i})=\dyad{i_C} = \dyad{i}^{\otimes n}$ (for $i = 0, 1$), is not loss-tolerant $\varepsilon_\decode = 0$. Yet, we can make a fully loss-tolerant logical $Z$ measurement by taking $n$ sufficiently large, 
    by simply measuring $Z$ on every qubit. 
    Hereafter, $X$, $Y$, $Z$ denotes the usual Pauli operators. Indeed, a logical $Z$ measurement will succeed if at least one physical $Z$ measurement on any physical qubits of the collected subset $R$ succeeds, which occurs with a non-zero probability if $\varepsilon < 1$. Therefore, as long as $\varepsilon < \varepsilon_{Z} = 1$, we can loss-tolerantly measure the $Z$ operator. This is the reason why we are considering the context in the derivation of loss-tolerance thresholds in Def.~\ref{def_context}.

\begin{proof}[Proof of Eq.~\eqref{eq_measurement_comm}]
    The proof of the second inequality (Eq.~\eqref{eq_measurement_comm}) is based on (ii) from the measurement postulate and follows exactly the same reasoning as for the proof of Eq.~\eqref{eq_code_measurement} using the non-physical machine represented in Fig.~\ref{fig_no_measurement_2} and
    \begin{multline}\label{eq_p_oo}
         \forall \varepsilon \in [0, 1], \forall [\hat O, \hat O'] \neq 0,\\   P(\hat O_C |1 - \varepsilon) + P(\hat O'_C |\varepsilon)\leq 1.
    \end{multline}
\end{proof}

These results are a consequence of the measurement postulate applied to QECCs and Eq.~(\ref{eq_p_oo}) has strong connections with quantum secret sharing~\cite{Hillery1999, Cleve1999, Gottesman2000, Markham2008} and conjugate coding~\cite{Wiesner1983}.

For ``good'' loss-tolerant QECCs for which $\varepsilon_{\decode} = \sfrac12$, it follows from these previous results (Eq.~\eqref{eq_code_measurement} and Eq.~\eqref{eq_measurement_comm}) that we can have at best $\varepsilon_\decode^{S} = \varepsilon_{{\hat O}}^{S} = \varepsilon_{{\hat O'}}^{S} = \sfrac12$, for $S = \{\decode, \hat O, \hat O'\}$. This limit is actually tight since we can, for example, perform loss-tolerant Pauli $X$ and $Z$ measurements ($[X, Z] \neq 0$) onto the class of surface codes and show that these measurements have a $50\%$ loss-threshold due to percolation theory~\cite{Stace2009}.

\begin{figure}[!ht]
    \centering
    \includegraphics[width=1 \columnwidth]{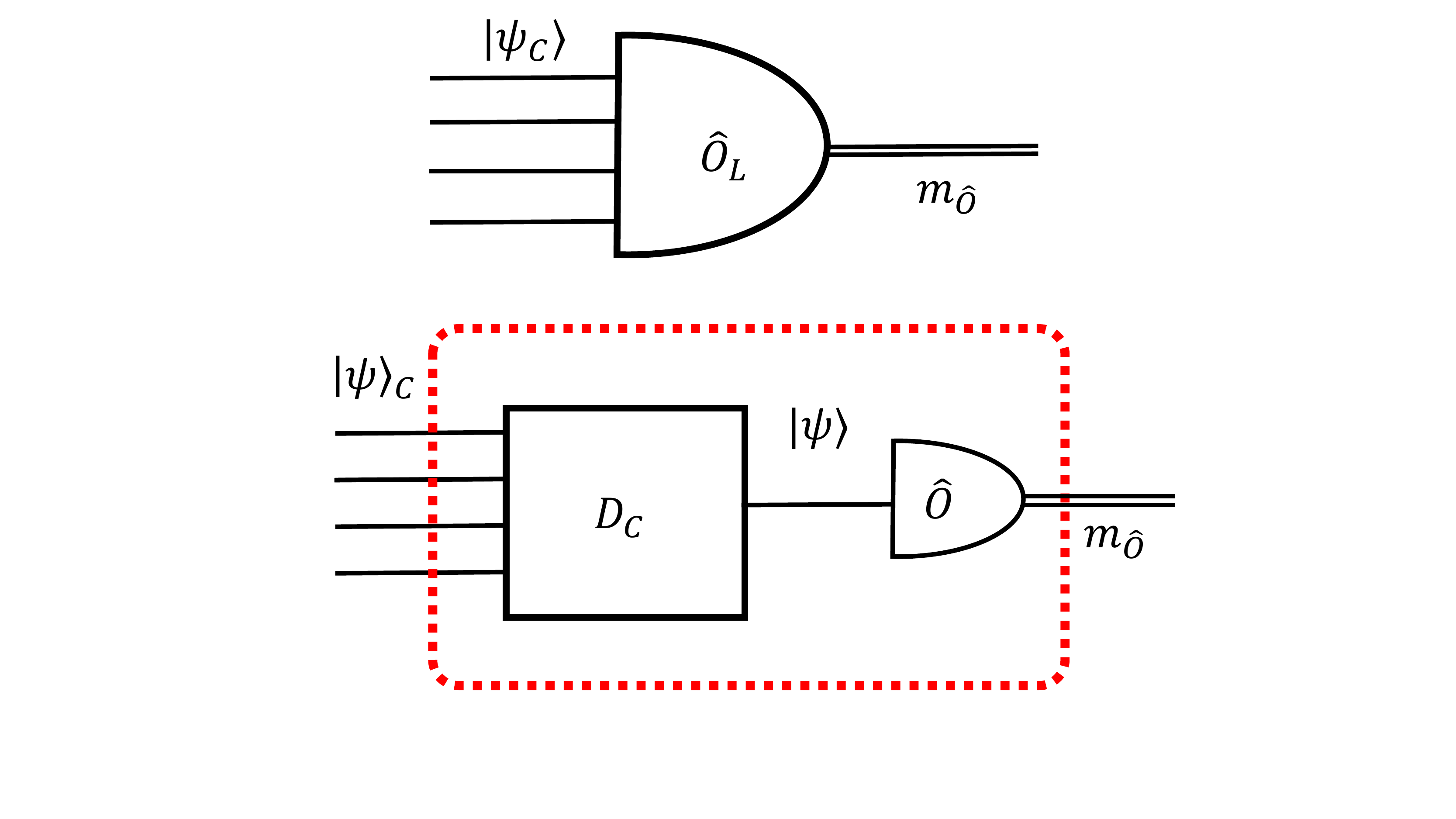}
    \caption{Logical operator measurements $m_{\hat O}$ made using the decoder $D_C$ and a physical operator measurement $\hat O$.
    }
    \label{fig_no_measurement_3}
\end{figure}

\begin{lemma}
    For all families of codes $\code$, if a context $S$ includes the decoder $\decode$ and operator measurements, then 
    $\varepsilon_\decode^S$ is the smallest threshold of this context: 
    $$\forall i \in S, \varepsilon_\decode^S \leq \varepsilon_i^S.$$
\end{lemma}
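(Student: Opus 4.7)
The plan is to show that any operator measurement in $S$ can be carried out by first decoding and then performing a physical measurement on the decoded qubit, which produces a success probability at least as high as the decoder's. This immediately upgrades the decoder threshold into a lower bound on every measurement threshold via Definition~\ref{def_context}.

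First I would verify the reduction depicted in Fig.~\ref{fig_no_measurement_3}. Given a code $C \in \code$ and an operator $\hat O \in S$, the strategy is: on receiving the surviving qubit subset $R$, apply $D_C$ to (attempt to) recover the encoded state $\ket{\psi}$, then perform a physical measurement of $\hat O$ on the recovered single-qubit state. By definition of an optimal decoder, whenever $R \in \decode_C$ the output is exactly $\ket{\psi}$, and by the measurement postulate the subsequent physical $\hat O$ measurement on $\ket{\psi}$ yields an outcome distributed according to the Born rule for $\hat O$ --- which is by construction the correct logical $\hat O$ outcome. Since this decode-then-measure strategy succeeds whenever the decoder succeeds, and $P(\hat O_C \mid 1-\varepsilon)$ is the success probability of the best available strategy, we obtain
\[
    P(\hat O_C \mid 1-\varepsilon) \;\geq\; P(\decode_C \mid 1-\varepsilon)
\]
for every $C \in \code$ and every $\varepsilon \in [0,1]$.

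Next I would conclude by invoking Definition~\ref{def_context}. Fix $\varepsilon < \varepsilon_\decode^S$ and $\varepsilon' > 0$; the definition supplies a single code $C \in \code$ such that $P(j_C \mid 1-\varepsilon) > 1-\varepsilon'$ simultaneously for all $j \in S$. For this code, $P(\decode_C \mid 1-\varepsilon) > 1-\varepsilon'$, and the inequality above then forces $P(\hat O_C \mid 1-\varepsilon) > 1-\varepsilon'$ as well. Hence the pair $(\varepsilon,\varepsilon')$ is also admissible in the definition of $\varepsilon_{\hat O}^S$, and taking the supremum over $\varepsilon$ gives $\varepsilon_{\hat O}^S \geq \varepsilon_\decode^S$ for every $\hat O \in S$; the case $i = \decode$ is trivial.

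I do not expect a serious obstacle: the argument is a direct reduction. The only point that deserves care is justifying that a physical single-qubit measurement of $\hat O$ on the decoded state reproduces the logical Born-rule statistics exactly --- this relies on the fact that an optimal decoder returns the bare logical qubit on its successful branch, so that \emph{logical $\hat O$} and \emph{physical $\hat O$ on the decoded qubit} coincide.
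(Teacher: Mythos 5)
Your proof is correct and follows essentially the same route as the paper's: the decode-then-measure reduction of Fig.~\ref{fig_no_measurement_3}, giving $P(\hat O_C \mid 1-\varepsilon) \geq P(\decode_C \mid 1-\varepsilon)$ and hence $\varepsilon_\decode^S \leq \varepsilon_i^S$. The only difference is that you spell out the passage from the pointwise probability inequality to the threshold inequality via Definition~\ref{def_context}, which the paper leaves implicit.
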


\begin{proof}
    We could perform any logical measurement $\hat O$, by first decoding the logical qubit state and then performing the physical measurement onto the physical qubit, as shown in Fig.~\ref{fig_no_measurement_3}. Therefore, the loss tolerance of any QECC is always smaller than or equal to the loss tolerance of any measurement.
\end{proof}

\section{Fundamental bounds on logical Bell-state measurements}\label{sec_BSM}

\subsection{Physical and logical Bell-state measurements}

A Bell-state measurement (BSM) on two physical qubits $a$ and $b$ projects them into one of these four Bell states:
\begin{equation}
\begin{aligned}
    \ket{\Phi^{\pm}_{a,b}} & = \frac{1}{\sqrt{2}} \left( \ket{0_a, 0_b} \pm \ket{1_a, 1_b} \right) & \Leftrightarrow & \left\{ \substack{Z_a Z_b \to +1 \\ X_a X_b \to \pm 1} \right. \\
    \ket{\Psi^{\pm}_{a,b}} & = \frac{1}{\sqrt{2}} \left(\ket{0_a, 1_b} \pm \ket{1_a, 0_b} \right) & \Leftrightarrow & \left\{ \substack{Z_a Z_b \to -1 \\ X_a X_b \to \pm 1} \right. 
\label{eq_bsm_stab}
\end{aligned}
\end{equation}

As shown in Eq.~(\ref{eq_bsm_stab}), a BSM is equivalent to the joint measurement of the two stabilizer operators $Z_a Z_b$ and $X_a X_b$.
Here, we use a subscript $a$ or $b$ to denote the qubit the Pauli operator is acting on.
A BSM is thus successful if we have successfully measured these two stabilizers. Rewriting a BSM into stabilizer measurements will be essential to make a link with quantum error correction in the following.
We emphasize that the results that we will derive in the following are completely general, 
and we will focus  on linear optics only from Section~\ref{sec_LOBSM} onwards.

\subsection{Arbitrarily high loss-tolerant BSM threshold}

\begin{figure}[!htb]
    \centering
    \includegraphics[width=1\columnwidth]{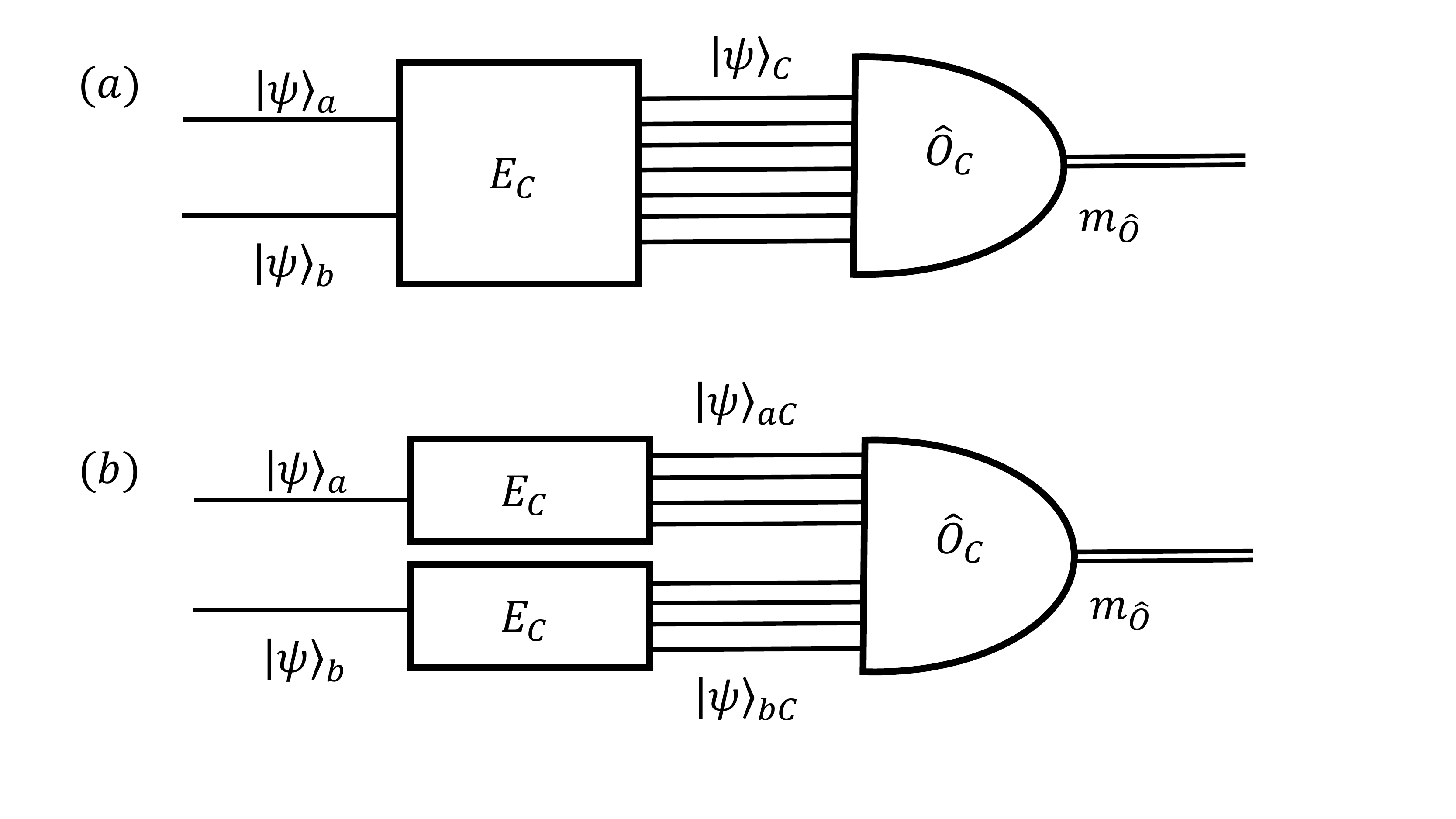}
    \caption{ (a) General logical encoding of a two-party quantum state shared by parties $a$ and $b$ and a measurement apparatus allowing the measurement of the operator set $\hat O$. (b) Variant where the two parties realize the logical encoding remotely.}
    \label{fig_logical_2}
\end{figure}

In the following, we consider the fundamental loss tolerance of logical BSMs in the two different settings displayed in Fig.~\ref{fig_logical_2}: in the general case, and when the logical encoding of the two qubits $a$ and $b$ is local. 
A logical BSM acting on two logical qubits $a$, $b$ (using a given QECC $C$) corresponds to the joint measurement of the operator set $\{X_{aC} X_{bC}, Z_{aC} Z_{bC}\}$, where we have added the subscript $C$ to indicate that it is a logical operator acting on 
the logical space defined by the QECC $C$.

\begin{lemma}
    For codes prepared as in Fig.~\ref{fig_logical_2}(a), the fundamental logical BSM loss-tolerance threshold is only upper-bounded by unity: $\varepsilon^{(BSM)} \leq 1$.
\end{lemma}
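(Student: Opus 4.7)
The plan is to observe that the stated inequality $\varepsilon^{(BSM)}\leq 1$ is trivial on its face (loss probabilities lie in $[0,1]$), so the real content of the lemma is that \emph{no fundamental no-go argument yields a strictly tighter bound} in the joint-encoding setting of Fig.~\ref{fig_logical_2}(a). To justify that, I first argue that each of the adversarial bounds derived in Sec.~\ref{sec_qec_bounds} is inactive here: the two BSM stabilizers $X_{aC}X_{bC}$ and $Z_{aC}Z_{bC}$ commute, so Eq.~\eqref{eq_measurement_comm} gives nothing; and a BSM is destructive rather than state-recovering, so Theorem~\ref{th_no_cloning_bound} and Eq.~\eqref{eq_code_measurement} do not apply either. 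It therefore suffices to exhibit a family of QECCs whose BSM threshold approaches unity.

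The construction I use is a ``repeated-Bell'' code family $\{C_n\}_{n\geq 1}$: encode the logical pair $(a,b)$ jointly into $n$ physical pairs $(a_i,b_i)_{i=1}^{n}$, with four-dimensional code space spanned by $|B_{C_n}\rangle := |B\rangle^{\otimes n}$ for $B\in\{\Phi^{\pm},\Psi^{\pm}\}$. By Eq.~\eqref{eq_bsm_stab}, the operators $X_{a_i}X_{b_i}$ and $Z_{a_i}Z_{b_i}$ act on each $|B\rangle^{\otimes n}$ with exactly the $\pm 1$ eigenvalue pattern that singles out $|B\rangle$ among the four Bell states. Hence, for every index $i$, both single-pair operators preserve the code subspace and serve as representatives of the logical operators $X_{aC_n}X_{bC_n}$ and $Z_{aC_n}Z_{bC_n}$. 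The logical BSM thus succeeds as soon as at least one pair $(a_i,b_i)$ has both of its physical qubits survive the loss channel.

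The loss analysis is then immediate. The probability that some pair survives intact is $1-(1-\eta^{2})^{n}$, which tends to $1$ as $n\to\infty$ for every $\eta>0$. By Def.~\ref{def_threshold} the family $\{C_n\}$ attains a BSM loss-tolerance threshold equal to $1$, matching the trivial upper bound and giving $\varepsilon^{(BSM)}=1$.

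The only checks required are routine: the four states $|B\rangle^{\otimes n}$ are mutually orthogonal (by orthogonality of the physical Bell basis and the tensor product), so $C_n$ is a genuine isometric encoding of two logical qubits; and the chosen operator representatives commute, so the joint logical BSM outcome is well defined. I do not anticipate a substantive obstacle in the proof — the difficulty is conceptual rather than technical, namely recognizing that the joint encoding in setting (a) sidesteps all three no-go bounds of Sec.~\ref{sec_qec_bounds}, which is exactly why the trivial upper bound $\varepsilon^{(BSM)}\leq 1$ is the best one available.
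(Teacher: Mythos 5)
Your proposal is correct and follows essentially the same route as the paper: both construct the ``Bell repetition'' code family with logical Bell states $\ket{\mathrm{Bell}}^{\otimes n}$, perform the logical BSM via physical pairwise BSMs of which only one must succeed, and conclude the threshold is unity since $1-(1-\eta^2)^n \to 1$ for any $\eta > 0$, after noting that the commutativity of $X_{aC}X_{bC}$ and $Z_{aC}Z_{bC}$ removes the obstruction of Eq.~\eqref{eq_measurement_comm}. Your additional verification that the single-pair operators are valid logical representatives, and your observation that Eq.~\eqref{eq_code_measurement} only forces $\varepsilon_\decode = 0$ rather than bounding $\varepsilon^{(BSM)}$, match the paper's remark that these codes are not loss-tolerant ``as expected from Eq.~\eqref{eq_code_measurement}.''
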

    This surprising first result seems promising --- we can find QECCs for which $\varepsilon^{(BSM)} = 1$ ---, 
    however, these codes are not as useful as it appears and this is essentially a consequence of the power of non-local encoding.
\begin{proof}
    Because $\{X_{aC} X_{bC}, Z_{aC} Z_{bC}\}$ is a set of commuting operators, we do not have limitations such as the one derived in Eq.~\eqref{eq_measurement_comm} and we are thus only limited by Eq.~\eqref{eq_code_measurement}. We find a variant of the class of repetition codes for which we can make arbitrarily loss-tolerant BSMs: 
    since the Bell states are orthogonal, we can encode two logical qubits into $2n$ physical qubits in such a way that
    logical Bell states are given by $\ket*{\mathrm{Bell}_C^{(n)}} = \ket{\mathrm{Bell}}^{\otimes n}$, 
    with $\ket{\mathrm{Bell}} \in \{ \ket{\Phi^\pm}, \ket{\Psi^\pm}\}$. 
    We can perform a logical BSM by performing physical two-qubit BSMs, only one of which needs to succeed. This logical BSM succeeds with arbitrarily high probability for sufficiently large $n$ for any $\varepsilon < 1$. Therefore, for this family of QECCs, $\varepsilon^{(BSM)} = 1$, yet similarly to the usual repetition code they are not loss-tolerant ($\varepsilon_{\decode}= 0$), as expected from Eq.~\eqref{eq_code_measurement}. We can also show that this threshold also holds when we restrict ourselves to a linear-optical setting.    
\end{proof}

This surprising result deserves some discussions. First, we should note that the encoding requires that the encoder $E$ has access to the two parties $a$ (having access to the state $\ket{\psi_a}$) and $b$  (having access to the state $\ket{\psi_b}$) to prepare the logical state (see Fig.~\ref{fig_logical_2}(a)). Therefore, an interesting open question is whether this result is of practical interest for quantum computing and quantum communications, which usually consider BSM performed on states prepared locally by different parties. 

\subsection{Loss-tolerant threshold of BSM on locally-prepared QECCs}

A more practical setting --- which we will consider in the remainder of this paper --- is the one depicted in  Fig.~\ref{fig_logical_2}(b). Here, contrary to the general case, each party $a$ and $b$ performs their logical encoding locally. Consequently, the physical qubits encoding the logical qubits owned by $a$ are not the same as the physical qubits owned by $b$. Such a configuration is particularly important for example in quantum communications, where the logical BSMs are performed on logical qubits generated remotely at different nodes. The class of ``Bell repetition codes'' cannot be prepared locally by each party and, consequently, the loss-tolerance threshold under such a restriction may be --- and actually is --- different, and we will call it $\bar \varepsilon^{(BSM)}$.

\begin{theorem}\label{th_bsm}
    Logical BSMs performed on locally-encoded logical qubits have a loss-tolerance threshold of at most $\sfrac12$:
    $$\forall \code, \bar \varepsilon^{(BSM)} \leq \sfrac12.$$
\end{theorem}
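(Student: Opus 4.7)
The plan is to derive a contradiction with the no-cloning bound of Theorem \ref{th_no_cloning_bound} by showing that a logical BSM with loss tolerance above $\sfrac12$ can be bootstrapped, via teleportation, into a decoder violating that bound.

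Assume for contradiction that $\bar\varepsilon^{(BSM)}>\sfrac12$ for some family $\code$. For each $C\in\code$ achieving a good logical BSM I would construct an auxiliary code $\tilde C$ as follows: encode the arbitrary input $\ket\psi$ into the $n$ physical qubits using $E_C$ (these are the only qubits exposed to the loss channel); the decoder of $\tilde C$ receives the surviving subset $R_a$, internally prepares a logical Bell pair $\ket{\Phi^+_{CC}}$ on $2n$ lossless ancilla qubits, performs the logical BSM of $C$ between $R_a$ and one half of that Bell pair, applies the Pauli correction announced by the BSM outcome on the remaining half, and finally runs $D_C$ on those lossless ancillas to output $\ket\psi$. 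This is standard teleportation, specialised to a setting where the loss channel acts only on Alice's side.

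Because the ancillas experience no loss, $D_C$ succeeds deterministically on them, so the only step of $\tilde C$ that can fail is the logical BSM. The main technical point is to verify that this BSM still succeeds with high probability in the non-uniform loss scenario where $n$ of its $2n$ qubits are lossless while the other $n$ suffer rate $\varepsilon<\bar\varepsilon^{(BSM)}$. I would prove this by the same monotonicity argument as Lemma \ref{lemma_decode_r}: the set of BSM-successful subsets of $N_a\cup N_b$ is upward-closed, and the mixed-loss distribution stochastically dominates the uniform-loss one at rate $\varepsilon$ (via the natural coupling that augments every uniform sample with $N_b$), so the BSM success probability in the mixed setting is at least its uniform-loss value, which by assumption exceeds $1-\varepsilon'$ for a suitable $C\in\code$.

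Putting everything together, $\tilde\code$ then satisfies $\varepsilon_{\tilde\decode}\geq\bar\varepsilon^{(BSM)}>\sfrac12$, contradicting Theorem \ref{th_no_cloning_bound}, so $\bar\varepsilon^{(BSM)}\leq\sfrac12$. The subtlety worth flagging in the final write-up is precisely this non-uniform loss monotonicity together with the convention that ancillas used internally by the decoder are not counted among the $N$ physical qubits of the code; if one insisted on subjecting the ancillary Bell pair to the same loss rate, the construction would need an additional outer layer protecting that Bell pair, which is doable but obscures the argument.
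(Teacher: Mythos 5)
Your proof is correct, but it takes a genuinely different route from the paper's. The paper does not use teleportation or no-cloning for this theorem: it converts the logical BSM into \emph{indirect single-qubit logical measurements} by preparing party $b$ in a known eigenstate of $X_b$ (resp.\ $Z_b$), so that a successful $X_aX_b$ (resp.\ $Z_aZ_b$) measurement yields $m_{X_a}$ (resp.\ $m_{Z_a}$); since one and the same code on $a$ must support both options, the two thresholds are constrained in the joint context $\{X_a,Z_a\}$, and the measurement-postulate bound Eq.~\eqref{eq_measurement_comm} for this non-commuting pair forces $\min[\varepsilon_{X_a}^{\{X_a,Z_a\}},\varepsilon_{Z_a}^{\{X_a,Z_a\}}]\leq\sfrac12$. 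You instead reduce to the no-cloning bound of Theorem~\ref{th_no_cloning_bound} by bootstrapping the hypothetical BSM into a teleportation decoder --- essentially the construction the paper itself introduces only later, as an \emph{achievability} result (Fig.~\ref{fig_LO_decoder}), here run in the contrapositive direction. Both reductions are sound, and your handling of the two subtleties (monotonicity under non-uniform loss via upward-closure and coupling, and the legitimacy of internal lossless ancillas, since a decoder is an arbitrary quantum operation on the received qubits) is consistent with the paper's framework. The trade-offs are as follows. The paper's intermediate inequalities Eq.~\eqref{eq_a_bar_eps_a} are reused verbatim in the proofs of the adaptive and static LOBSM bounds (Theorem~\ref{th_absm} and the $\sfrac{1}{1+p}$ bound), so its route builds machinery needed later, and it only ever uses the BSM as a device producing correct classical outcomes. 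Your route is more self-contained --- it needs nothing beyond Theorem~\ref{th_no_cloning_bound} --- but it demands slightly more of the BSM: for the unmeasured half $c$ of the Bell pair to collapse onto the teleported state, the POVM elements of the successful outcomes must (approximately) coincide with the logical Bell projectors, not merely reproduce correct outcome statistics. This does follow (correctness on the four logical Bell states plus positivity of the POVM pins its elements to the Bell projectors on the code space, with errors vanishing as $\varepsilon'\to 0$, hence teleportation fidelity tending to one), and the paper makes the same implicit assumption in its decoder corollary, but your final write-up should state that continuity step explicitly rather than treat the logical BSM as an ideal projective measurement by fiat.
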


\begin{figure}[!ht]
    \centering
    \includegraphics[width=1\columnwidth]{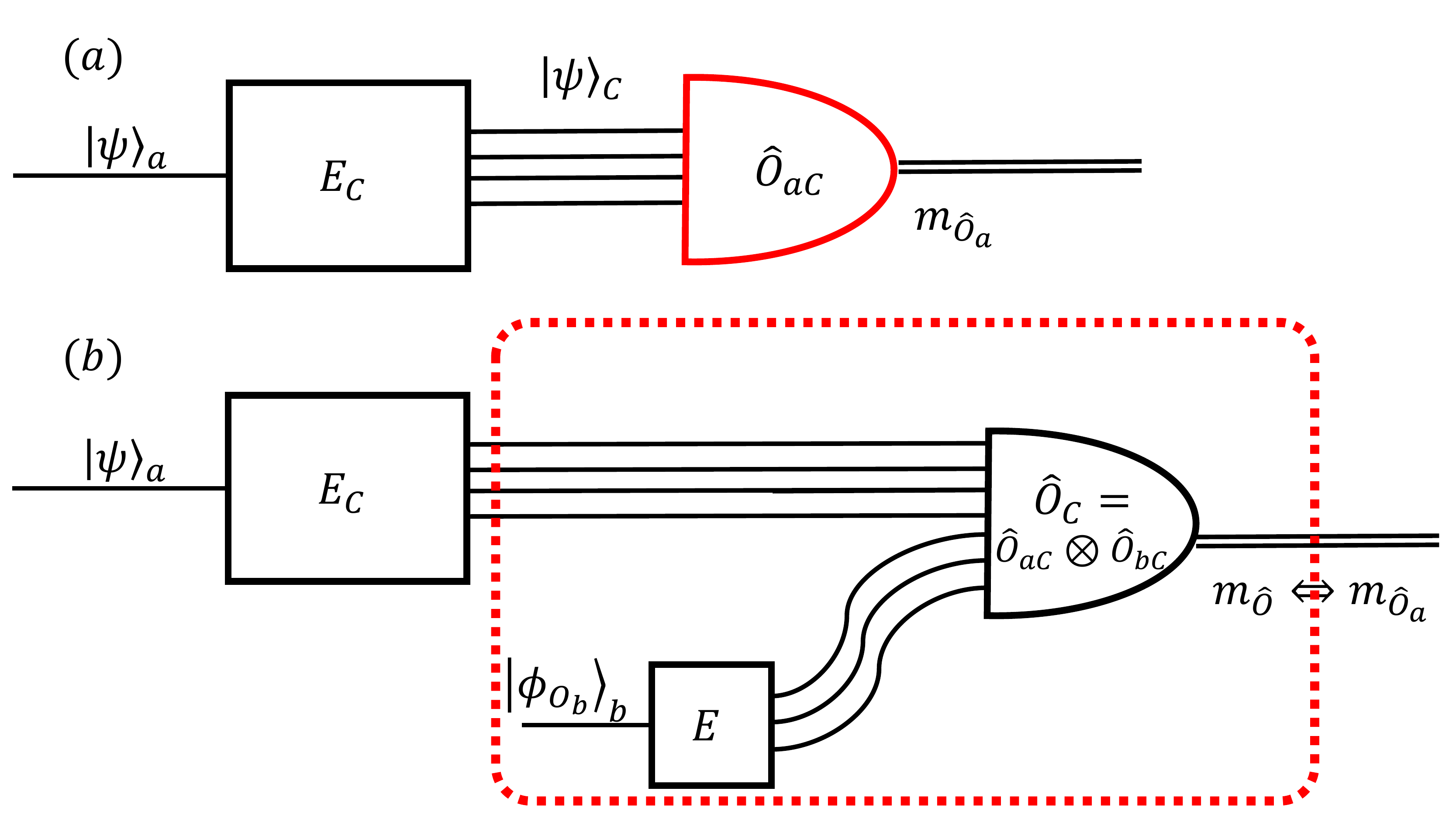}
    \caption{ (a) Logical measurement of the operator $\hat O_a$. (b) Implementation using a two-party logical encoding with a prepared state $\phi_{\hat 0_b}$ generated by $b$ and by measuring $\hat O_C = \hat O_{aC} \otimes \hat O_{bC}$.}
    \label{fig_logical_Oa}
\end{figure}

\begin{proof}
    To find this theoretical bound, we consider how to make single-qubit logical measurements of an operator $\hat O_a$ on party $a$, using the second party $b$ as an ancilla, as depicted in Fig.~\ref{fig_logical_Oa}. 
    Here, we use a known state $\ket*{\phi_{\hat O_b}}$,
    chosen to be an eigenstate of a second logical operator $\hat{O}_b$ acting on system $b$,
    so that the measurement of this operator yields a deterministic outcome $m_{\hat O_b}$. We then measure $\hat O = \hat O_a \otimes \hat O_b$ such that the measurement outcome $m_{\hat O}$ corresponds to an indirect measurement of $m_{\hat O_a}$. For example, if $\hat O = X_a X_b$ and $\ket*{\phi_{\hat O_b}} = \ket{\pm}_b$, then $m_{X_b} = \pm 1$, and $m_{X_a} = m_{X_a X_b} m_{X_b}$. 
    We denote the threshold for successfully measuring $\hat{O}_a$ in this indirect way by $\bar{\varepsilon}_{\hat{O}_a}$, and we similarly define $\bar{\varepsilon}_{\hat{O}_b}$. Since this indirect approach is a way to measure $\hat{O}_a$, it can only work provided $\varepsilon_a<\varepsilon_{\hat{O}_a}$, where $\varepsilon_a$ is the single-qubit loss rate of the physical qubits in $a$, and $\varepsilon_{\hat{O}_a}$ is the threshold for successfully measuring $\hat{O}_a$ regardless of the method used. We similarly denote the loss rate of the physical qubits in $b$ by $\varepsilon_b$, and in general $\varepsilon_a\neq\varepsilon_b$.
    By switching the roles of $a$ and $b$, we find the same result for $\varepsilon_b$: $\varepsilon_b < \varepsilon_{\hat O_b}$.
    We therefore find that the two loss-tolerance thresholds $\bar \varepsilon_{\hat O, a}$, $\bar \varepsilon_{\hat O, b}$
    for measuring the operator $\hat O = \hat O_a \otimes \hat O_b$ are bounded by:
    \begin{align}
        \bar \varepsilon_{\hat O, a} &\leq \varepsilon_{\hat O_a} ;     &\bar \varepsilon_{\hat O, b} &\leq \varepsilon_{\hat O_b}. 
    \end{align}
    This should hold if we measure $X_a$ through a $X_a X_b$ measurement and $Z_a$ through a $Z_a Z_b$ measurement, as above. 
    Consequently, since a BSM is a joint measurement of $X_a X_b$ and $Z_a Z_b$, we should have
    \begin{align}
        \bar \varepsilon^{(BSM)}_{a} & \leq \min [\bar \varepsilon_{X_a X_b, a}, \bar \varepsilon_{Z_a Z_b, a}] \nonumber \\ & \leq \min[\varepsilon_{X_a}^{\{X_a,Z_a\}}, \varepsilon_{Z_a}^{\{X_a,Z_a\}}] \leq \sfrac{1}{2}, \label{eq_a_bar_eps_a}\\
        \bar \varepsilon^{(BSM)}_{b} &  \leq \min [\bar \varepsilon_{X_a X_b, b}, \bar \varepsilon_{Z_a Z_b, b}]  \nonumber \\ & \leq \min[\varepsilon_{X_b}^{\{X_b,Z_b\}}, \varepsilon_{Z_b}^{\{X_b,Z_b\}}] \leq \sfrac{1}{2},
    \end{align}
    which conclude the proof.
    In Eq.~\eqref{eq_a_bar_eps_a}, we go from the first to the second line by using the fact that a $Z_a Z_b$ (respectively $X_a X_b$) measurement can be used to perform an indirect $Z_a$ (resp. $X_a$) measurement: $\bar \varepsilon_{Z_a Z_b, a} \leq \varepsilon_{Z_a}$ (resp. $\bar \varepsilon_{X_a X_b, a} \leq \varepsilon_{X_a}$). 
    However, the minimization should be done jointly onto both $X_a X_b$ and $Z_a Z_b$, which implies that the minimization on the single-qubit operators $X_a$ and $Z_a$ should be done in the same context $\{X_a, Z_a\}$, hence the result in the second line.
    Then, we have obtained the final result by maximizing the loss-thresholds for the operators $X_a$ and $Z_a$ (respectively for $X_b$ and $Z_b$) under the constraint Eq.~\eqref{eq_measurement_comm}, because they are non-commuting.
\end{proof}
We should also note that we have considered here independent loss thresholds for each party $a$ or $b$. 
Therefore, we have derived bounds on locally-encoded logical BSMs based on single-qubit measurement loss-thresholds. 
We emphasize that the loss-thresholds for logical BSMs derived in this section are very general and do not depend on physical implementation.
In Sec.~\ref{subsec_fundamental_limit_lobsm}, we will show that this bound is actually tight even when we restrict ourselves to a linear-optical setting.

\section{Fundamental bounds on linear-optical logical BSM} \label{sec_LOBSM}

\subsection{Linear-optical quantum information processing}
\label{sec_loqip}

In the remainder of this paper, we will focus on linear-optical BSMs.
In linear-optical quantum information processing, photons are the quantum information carriers, and we process them by using optical interferometers for unitary transformations, and photon-number-resolving detectors for measurements. 
We assume that physical qubits are encoded onto single photons using a dual-rail encoding, which is the most common encoding for discrete variable photonic quantum information processing.

Linear-optical elements enable arbitrary unitary transformations between photonic modes, yet are not sufficient to perform  deterministic two-qubit gates between photonic qubits. Two-qubit gates are thus inherently probabilistic using linear-optical quantum information processing~\cite{Knill2001}. For this reason, it is hard  (though not impossible~\cite{Ralph2005}) to re-encode a logically-encoded qubit deterministically. 
 
Another limitation concerns the qubit measurements built using single-photon or potentially photon-number resolving detectors. 
These detectors are intrinsically destructive measurement apparata so that the detected photons do not exist anymore after their
measurements, and thus cannot be reused. Such a subtlety is generally well handled in both a quantum communication setting and in linear-optics quantum computing such as the measurement-based or fusion-based quantum computing paradigms~\cite{Raussendorf2001, Raussendorf2006, Bartolucci2021, Lee2023}.

A LOBSM without ancillary photons has an intrinsic success probability of at most $\sfrac12$~\cite{Calsamiglia2001}. In practice, it is possible to design a linear optical setup \cite{Weinfurter1994, Braunstein1995, Michler1996} which measures the operator $Z_a Z_b$ measurement deterministically but only realizes the $X_a X_b$ measurement for a particular measurement outcome of the $Z_a Z_b$ operator, either +1 or -1, which occurs with probability $\sfrac12$. The role of $Z_a Z_b$ and $X_a X_b$ can be switched, and we can design other LOBSM setups which measure $X_a X_b$ deterministically and $Z_a Z_b$ with success probability $\sfrac12$.

\begin{figure}[!tb]
    \centering
    \includegraphics[width=1 \columnwidth]{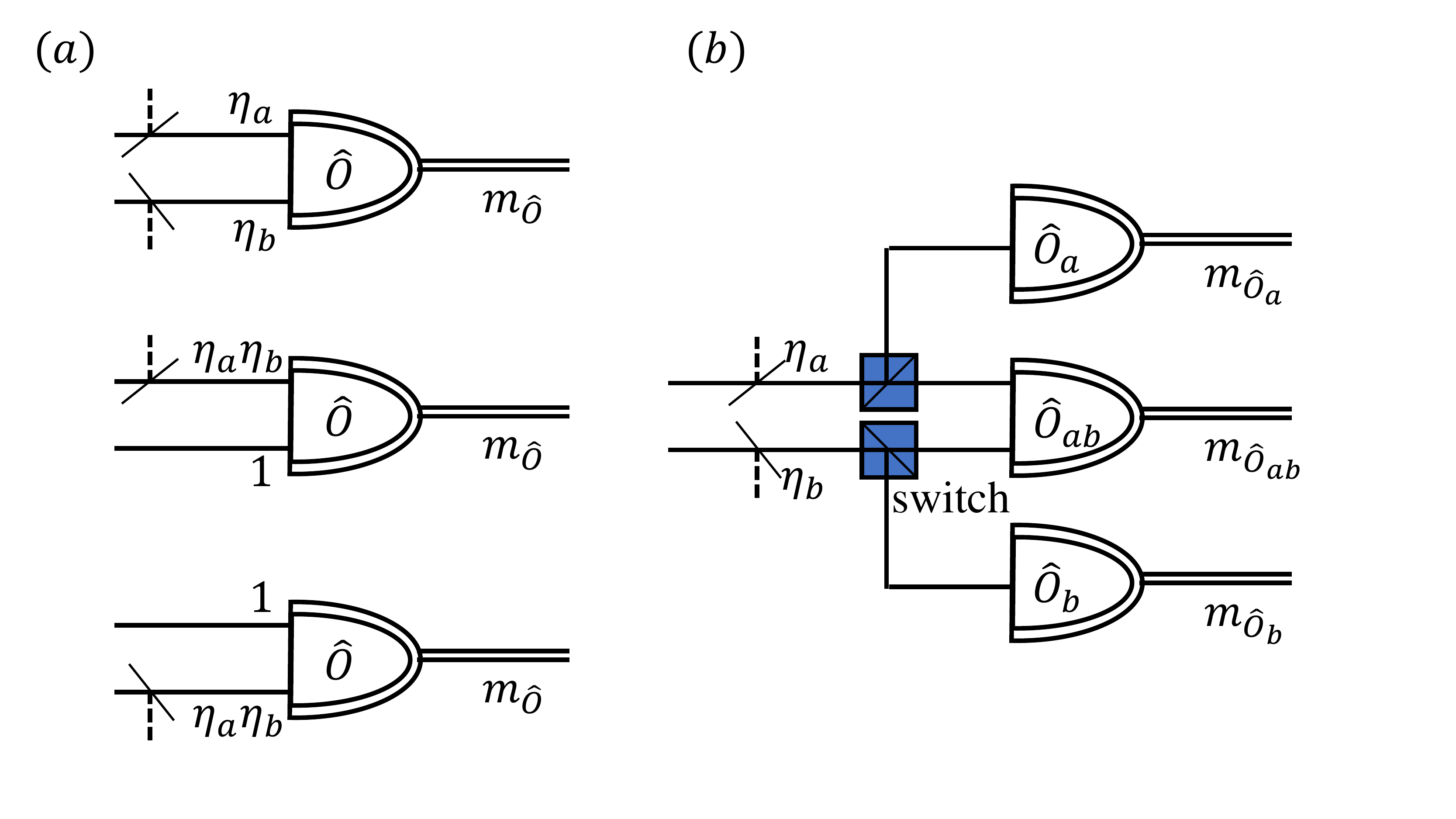}
    \caption{(a) Top: Physical two-photon linear-optical measurement apparatus measuring the operator $\hat O$. The detection probabilities $\eta_a, \eta_b$ of each photon yield a success probability $\eta_a \eta_b$. Middle and bottom: Each of these photon detection probabilities yields the same measurement success probability. (b) Measurement apparatus which can switch on demand between two-photon joint measurements and two single-photon measurements.}
    \label{fig_LOBSM}
\end{figure}

We now focus on logical BSMs built with linear optics. Figure~\ref{fig_LOBSM}
displays the measurement devices that we are considering for LOBSM at the physical level. Such a device can only yield a successful result if the two qubits are successfully detected, which occurs with a probability $\eta_a \eta_b$, $\eta_a$ and $\eta_b$ being the single-photon detection probabilities for each QECC. It is also based on the interference of the two photonic qubits which erase the which-path information. Consequently, it doesn't matter how the losses are spread over the two physical qubits: if qubit $a$ has a detection probability $\eta_a' = \eta_a \eta_b$, while $\eta_b' = 1$, the LOBSM device has the same success probability. Therefore, the three configurations illustrated in Fig.~\ref{fig_LOBSM} have the same success probability. Besides, a LOBSM yields a result probabilistically upon two-photon detection, such that we can measure $X_a X_b$ with a probability $p_{XX}$ and $Z_a Z_b$ with probability $p_{ZZ}$ with $p_{XX} + p_{ZZ} \leq \sfrac32$, and the probability to measure both of them is $p_{XX \cup ZZ}\leq \sfrac12$. 

Therefore, we have three constraints for a LOBSM:
\begin{enumerate}
    \item Joint successful detection probability ($\eta_a \eta_b$). \label{LOBSMc1}
    \item $p_{XX \cup ZZ} \leq p$  \label{LOBSMc2}
    \item $p_{XX} + p_{ZZ} \leq 1 + p$.  \label{LOBSMc3}
\end{enumerate}
The overall success probability of a LOBSM including losses is therefore $P_{LOBSM} = \eta_a \eta_b p_{XX \cup ZZ}$, while we can recover the $XX$ ($ZZ$) component with probability $P_{XX} = \eta_a \eta_b p_{XX}$ ($P_{ZZ} = \eta_a \eta_b p_{ZZ}$). Without the use of  auxiliary states, we generally cannot do better than having  $p = \sfrac12$, but the intrinsic success probability of a LOBSM can be made near-deterministic $p \to 1$ through the use of many entangled ancillary photons \cite{Grice2011}.
This solution is in practice challenging to implement since a successful LOBSM would generally require all the ancillary photons to be detected and thus would make such a LOBSM much more sensitive to photon losses.

By restricting our logical measurement device so that we can implement it using linear optics, the question that we are now addressing is whether we can reach the fundamental bounds that we already derived in the general case. 
While doing, we also find stricter bounds for more restricted constraints, in subsections \ref{sec_adapt_LOBSM} and 
\ref{subsec_static_LOBSM}, the former one being already discovered by Lee et al.\@ \cite{Lee2019}, 
who thought it applied to all linear optical setups.

\subsection{Transverse logical operator measurements}

\begin{figure}[!htb]
    \centering
    \includegraphics[width=0.7\columnwidth]{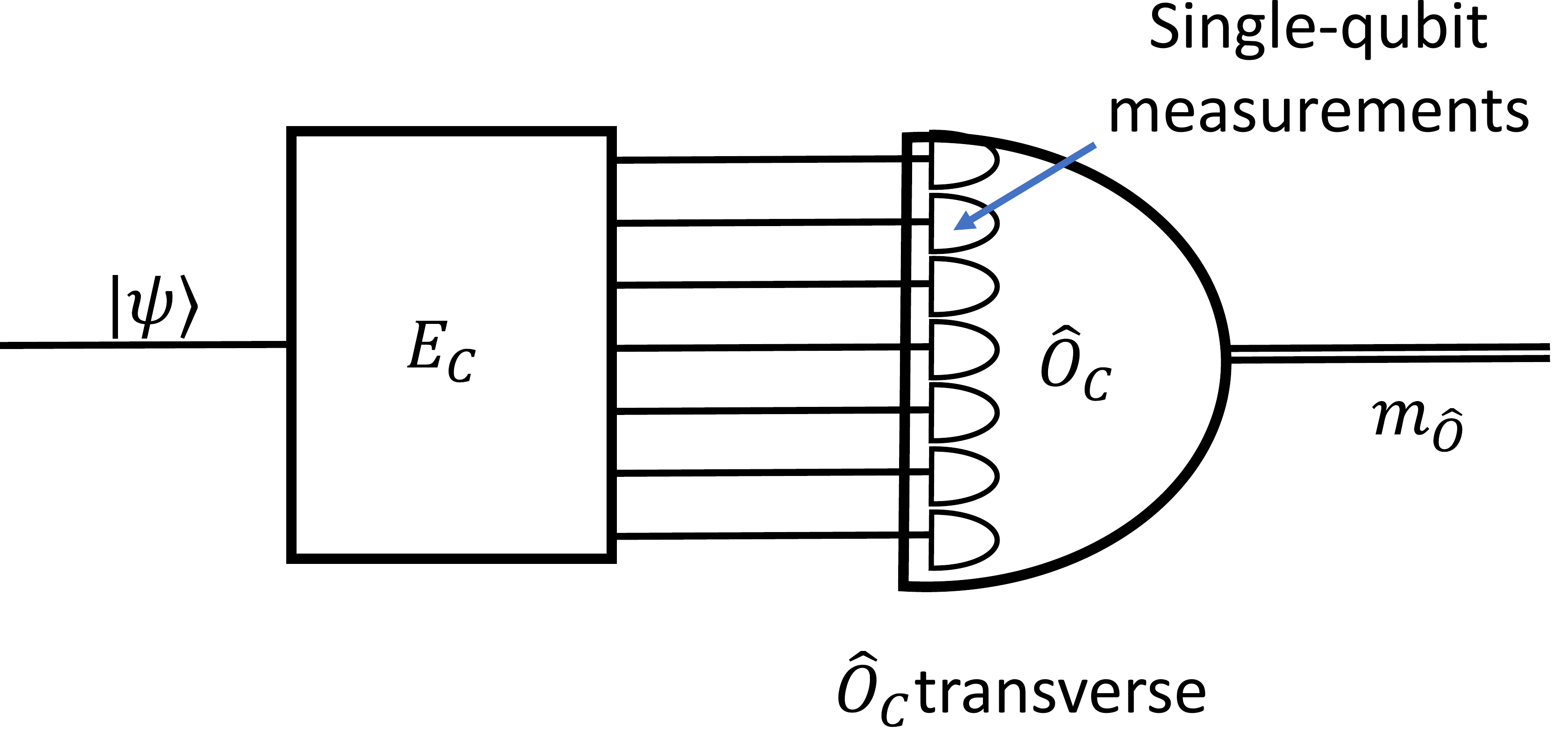}
    \caption{Transverse logical operator $\hat O_C$ which can be measured using a decoder made of single-qubit operations.}
    \label{fig_transverse}
\end{figure}

In the following, we will focus on single- and two-qubit logical operators which are transverse. We use the definition of a transverse operator, $\hat O$, as a logical operator which can be decomposed into single-qubit operators $\hat O = \hat O_1 \otimes \hat O_2 \otimes \cdots \otimes \hat O_{|N|} $. As a result, we can also perform a logical measurement of $\hat O$ by performing only individual physical single qubit measurements as illustrated in Figure~\ref{fig_transverse}. For our concerns, individual single qubit measurements are interesting since they can be natively implemented in a linear-optical setting. 
More generally as well, due to the Eastin--Knill theorem~\cite{Eastin2009}, transverse operators are also considered to be the ``easy'' set of operations on a QEC code as opposed to non-transversal operations. 
We consider a two-qubit logical gate to be transverse if it can be decomposed onto two-qubit physical operators each of which acts on one physical qubit from one code and its counterpart from the second code.

Note that in Fig.~\ref{fig_transverse}, we use a more restricted abstract measurement apparatus: contrary to the case of Fig~\ref{fig_logical} where the measurement apparatus can recover the measurement outcome in any case where it is theoretically possible, in Fig.~\ref{fig_transverse}, the measurement apparatus has restricted capabilities and can only recover the measurement outcome only in the transverse case. 
For logical BSMs, we should thus consider families of codes for which the logical $X$ and the logical $Z$ operators are transverse, which is for example the case for all the Calderbank--Shor--Steane codes.

\subsection{Logical LOBSM based on adaptive LOBSM.} \label{sec_adapt_LOBSM}

Using our formalism, we can derive the loss threshold for logical BSMs based on adaptive LOBSMs. This result relates to one of the main results from Ref.~\cite{Lee2019}, which claims to have derived the fundamental loss-tolerance threshold for linear optical logical BSMs. We will moderate this claim by proving that this is only true in the specific context of logical BSMs based solely on adaptive LOBSMs. Moreover, we will prove in Sec.~\ref{subsec_fundamental_limit_lobsm} that we can overcome this limit in a more general setting.

\begin{theorem} \label{th_absm}
    For a logical BSM based on adaptive LOBSMs acting between $a$ and $b$, the loss thresholds for qubit $a$, $\varepsilon_{a}^{(ABSM)}$, and $b$, $\varepsilon_{b}^{(ABSM)}$, are always bounded by:
    $$ \forall \code, \left(1 - \varepsilon_{a}^{(ABSM)}\right) \left(1 - \varepsilon_{b}^{(ABSM)}\right) \geq \sfrac12,$$
    where we use ``$(ABSM)$'' to indicate that we are restricting ourselves to decoders using adaptive physical LOBSMs.
\end{theorem}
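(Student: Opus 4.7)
The plan is to follow the indirect-measurement strategy used in the proof of Theorem~\ref{th_bsm}, now specialized to adaptive LOBSMs by exploiting the linear-optical constraint that any single physical LOBSM can produce a given two-qubit Pauli outcome ($XX$ or $ZZ$) with conditional probability at most $1$. Starting from an adaptive-LOBSM-based logical BSM scheme that tolerates losses $(\varepsilon_a, \varepsilon_b)$, I would prepare Bob's logical qubit in a known logical $X$-eigenstate (as in Fig.~\ref{fig_logical_Oa}) so that the $X_{aC}X_{bC}$ component of the BSM yields an indirect measurement of $X_{aC}$ on Alice's side, and symmetrically with a $Z$-eigenstate to indirectly measure $Z_{aC}$.

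The key step is to pin down the effective per-qubit loss of this indirect single-qubit measurement. Each physical LOBSM on a pair $(i_a, i_b)$ yields an $X_a^{(i)}X_b^{(i)}$ outcome only if both photons arrive (joint detection probability $\eta_a \eta_b$) and the LOBSM actually measures $XX$ (conditional probability bounded by $p_{XX} \leq 1$, as dictated by the LOBSM linear-optical constraints \ref{LOBSMc1}--\ref{LOBSMc3}). Consequently, independently of the adaptive strategy, the per-pair $XX$ outcome rate is upper-bounded by $\eta_a \eta_b$. Since Bob's state is classically known, each such $XX$ outcome becomes an effective single-qubit $X_a^{(i)}$ outcome, so the success probability of the LOBSM-based indirect $X_{aC}$ measurement is upper-bounded by that of a direct transversal $X_{aC}$ measurement on Alice's physical qubits subject to an effective single-qubit loss $1 - \eta_a \eta_b$. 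The same equivalence applies to $Z_{aC}$ via $Z_aZ_b$.

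Because the same QECC must simultaneously support both $X_a$ and $Z_a$ logical measurements (in the context $\{X_a, Z_a\}$, since the logical BSM measures both $X_aX_b$ and $Z_aZ_b$), Eq.~\eqref{eq_measurement_comm} gives $\varepsilon_{X_a}^{\{X_a,Z_a\}} + \varepsilon_{Z_a}^{\{X_a,Z_a\}} \leq 1$, so the smaller of the two thresholds is at most $\sfrac12$. Requiring the effective loss $1 - \eta_a \eta_b$ to lie below both single-qubit thresholds then forces $1 - \eta_a \eta_b \leq \sfrac12$, which rearranges to $(1-\varepsilon_a^{(ABSM)})(1-\varepsilon_b^{(ABSM)}) \geq \sfrac12$, establishing the theorem. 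The main obstacle I expect is cleanly ruling out that some clever adaptive scheduling --- such as conditioning LOBSM orientations or pairings on earlier outcomes --- could beat the per-pair upper bound $\eta_a \eta_b$ on any single Pauli outcome; this should boil down to the observation that each pair's LOBSM is a single measurement event, pointwise constrained by $p_{XX}, p_{ZZ} \leq 1$, so no scheduling strategy can inflate the rate of a single operator outcome beyond $\eta_a \eta_b$ per pair.
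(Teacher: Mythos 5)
Your proof is correct and takes essentially the same approach as the paper: both arguments transfer all losses onto party $a$ using the joint-detection property of physical LOBSMs (effective transmission $\eta_a' = \eta_a\eta_b$, $\eta_b'=1$), then bound the resulting indirect single-qubit $X_a$ and $Z_a$ measurements by their thresholds in the joint context $\{X_a,Z_a\}$ and invoke Eq.~\eqref{eq_measurement_comm} to cap the tolerable effective loss at $\sfrac12$. The only cosmetic difference is that the paper cites the intermediate bound $\bar\varepsilon_a^{(BSM)}\leq\sfrac12$ of Theorem~\ref{th_bsm} (Eq.~\eqref{eq_a_bar_eps_a}) directly, whereas you re-derive that chain inline.
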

\begin{figure}[!tb]
    \centering
    \includegraphics[width=0.9 \columnwidth]{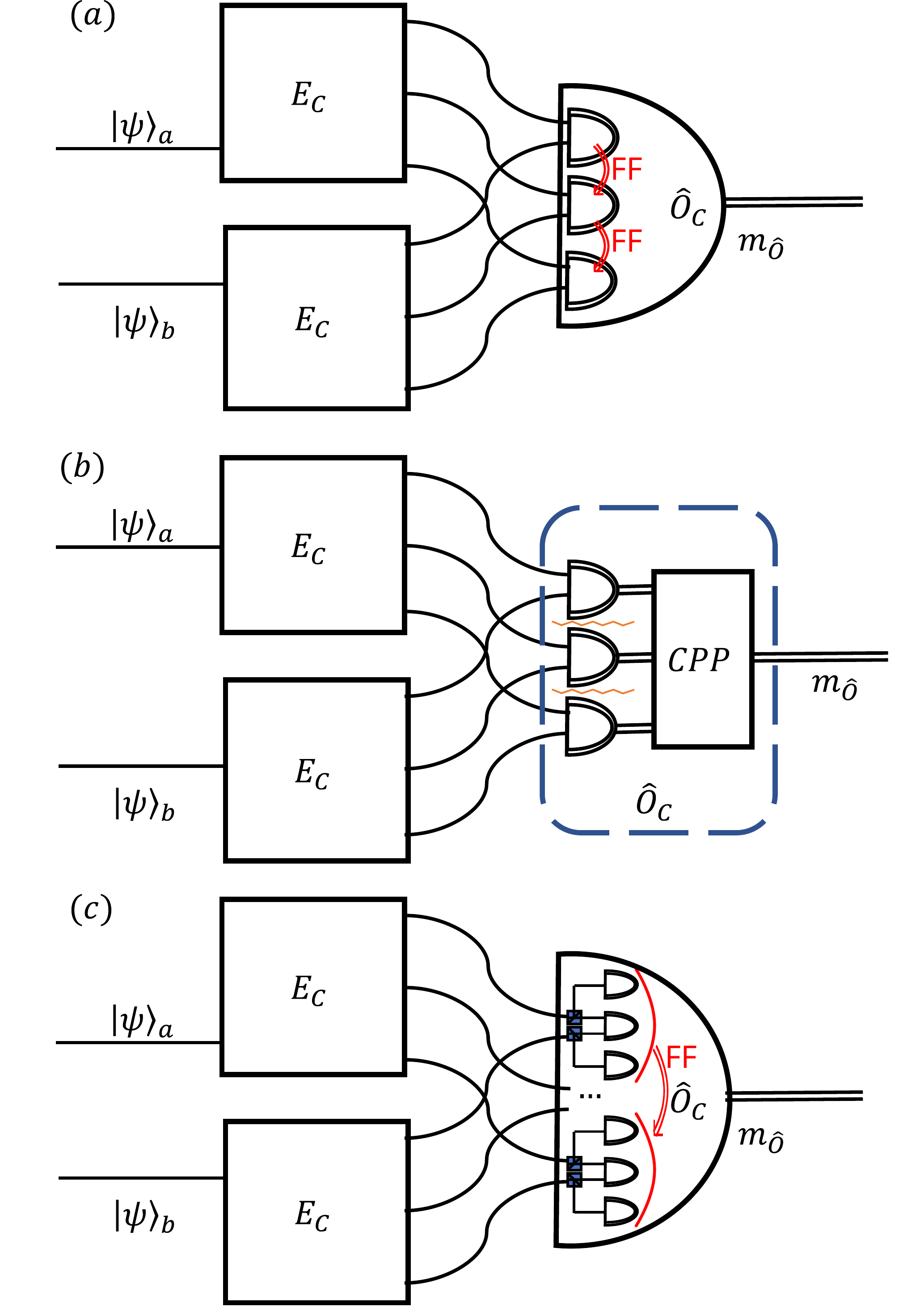}
    \caption{Logical measurement apparatus based on two-qubit linear-optical measurements. Based on (a) adaptive  and  (b) passive two-photon measurements and on an (c) adaptive combination of two-photon and single-photon measurements. CPP indicates a classical post-processing device that reconstructs the logical $\hat O$ measurement outcome from the measurement outcomes of each two-photon linear-optical measurement apparatus. FF stands for classical feed-forward.}
    \label{fig_LLOBSM}
\end{figure}
\begin{proof}
    We now impose restrictions on the logical measurement devices which should be solely based on physical LOBSM apparata, as depicted in Fig.~\ref{fig_LLOBSM}(a). Because we are using only physical LOBSMs, this logical BSM device works with the same probability if all the losses are transferred to the party $a$: $\eta_a' = \eta_a \eta_b$ and $\eta_b'=1$.
    Restricting ourselves to such logical measurement devices, we can derive a new bound using the measurement apparatus depicted in Fig.~\ref{fig_logical_Oa} with the specific operator measurement depicted in Fig.~\ref{fig_LLOBSM}(a).
    In that case, because $\eta_a' = \eta_a \eta_b = (1 - \varepsilon_a)(1 - \varepsilon_b)$, the measurement should be loss-tolerant if  $\varepsilon_a' = 1 - \eta_a'$ is below $\bar \varepsilon_a^{(BSM)}$ 
    which is itself bounded by $\sfrac12$ by Eq.~\eqref{eq_a_bar_eps_a}. 
    Therefore, we find the following condition for the thresholds $\varepsilon_{{a}}^{(ABSM)}$, and $\varepsilon_{{b}}^{(ABSM)}$:
    \begin{equation*}
         (1 - \varepsilon_{{a}}^{(ABSM)}) (1 - \varepsilon_{{b}}^{(ABSM)})
         \geq 1 - \bar\varepsilon_a^{(BSM)} \geq \sfrac12. 
    \end{equation*}
\end{proof}

Interestingly, to derive this bound, we only used the constraint \ref{LOBSMc1} of the physical LOBSM, but didn't consider its probabilistic nature (constraints \ref{LOBSMc2} and \ref{LOBSMc3}). In~\cite{Lee2019}, Lee et al. showed that this threshold is tight with an example of an \emph{adaptive} logical LOBSM onto quantum parity codes which reaches this loss tolerance. This result illustrates that we can handle the intrinsic probabilistic nature of the physical LOBSMs, by using adaptive measurements, i.e. by changing the LOBSM configuration and deciding which Bell states can be measured unambiguously. A second consequence of this result is that having access to ancillary photons \cite{Grice2011, Ewert2014, Wein2016, Olivo2018}, even with unit detection probability,  to allow physical LOBSMs with potentially near-deterministic success probability cannot improve the loss threshold of logical BSMs based on adaptive LOBSMs, even though it could still potentially improve performances in other aspects (e.g.\@ the number of physical qubits required in the code to reach a given success probability).

\subsection{Derivation of the logical LOBSM loss-threshold based on static LOBSMs} \label{subsec_static_LOBSM}
Another important consideration for applications is what is the best achievable loss-tolerance for static LOBSMs, where the physical LOBSM basis is chosen beforehand and is not modified based on the previous measurement outcomes,
which simplifies the experimental implementation.
This configuration corresponds to the one depicted in Fig.~\ref{fig_LLOBSM}(b), where there are no communications between the different physical LOBSMs and the resulting logical BSM is post-processed through the two-photon measurement outcomes of each BSM.

Here, we assume that we chose the same configuration for all of the physical LOBSMs, which is equivalent to choosing $p_{XX}$ and $p_{ZZ}$ constant for all the physical LOBSMs. In that case, the reasoning is similar to the adaptive case discussed in the previous section, except that we should fulfill the conditions

\begin{multline}
     (1 - \varepsilon_{a}^{(SBSM)}) (1 - \varepsilon_{b}^{(SBSM)}) \\
     \geq \max\left[\frac{(1 -\varepsilon_{X_a}^{\{X_a, Z_a\}})}{p_{XX}}, \frac{(1 -\varepsilon_{Z_a}^{\{X_a, Z_a\}})}{p_{ZZ}}\right]  \label{eq_absm_det}\\  \geq \sfrac23.     
\end{multline}

Indeed, using the same reasoning as for the previous case, we have $\varepsilon_a' = 1 - \eta_a'$ with $\eta_a' = (1-\varepsilon_a) (1-\varepsilon_b)$. We can use the setup illustrated in Fig.~\ref{fig_logical_Oa} but using the measurement apparatus described in Fig.~\ref{fig_LLOBSM}(b) to perform an indirect $X_a$ measurement from a $X_a X_b$ measurement. This measurement will succeed as long as 
$\eta_a' p_{XX} > 1 - \varepsilon_{X_a}^{\{X_a, Z_a\}}$.
We find a similar result for a $Z_a$ measurement. So $\eta_a'$ should be 
greater than both $\left(1 -\varepsilon_{X_a}^{\{X_a, Z_a\}}\right) / p_{XX}$ and $\left(1 - \varepsilon_{Z_a}^{\{X_a, Z_a\}}\right) / p_{ZZ}$, hence the first inequality in Eq.~\eqref{eq_absm_det}.

The final lower bound of $\sfrac23$ is found by minimizing this quantity under the linear-optic BSM constraints ($p_{XX} + p_{ZZ} \leq \sfrac32$) and Eq.~\eqref{eq_measurement_comm}. This bound is also tight because we can reach it with surface codes, by using BSMs with random bases $p_{XX} = p_{ZZ} = \sfrac34$. In that case, we can make a logical $Z_a Z_b$ (respectively $X_a X_b$) measurement on two surface codes if $p_{ZZ} \eta_a \eta_b$  (resp. $p_{XX} \eta_a \eta_b$) is above the percolation threshold $\sfrac12$, i.e.\@ if $\eta_a\eta_b \geq \sfrac23$. We have also found this bound numerically with a tree graph state logical encoding in a previous work~\cite{Hilaire2021error}. 
With access to ancillary state-assisted LOBSMs with overall success probability $\sfrac12 \leq p < 1$, this results straightforwardly 
generalizes to a tight lower bound of $1/(1+p)$.

\begin{theorem}
    For a logical BSM based on static LOBSMs, the loss-thresholds for qubits $a$, $\varepsilon_{a}^{(SBSM)}$, and $b$, $\varepsilon_{b}^{(SBSM)}$, are bounded by:
    \begin{equation}
    \forall \code, (1 - \varepsilon_{a}^{(SBSM)}) (1 - \varepsilon_{b}^{(SBSM)}) \geq \sfrac{1}{1+p},
    \label{eq_lobsm_static_p}
    \end{equation}
    when using LOBSM with success probability $p$.
\end{theorem}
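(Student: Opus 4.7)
The plan is to lift the adversarial argument used for Theorem~\ref{th_absm} to the static setting of Fig.~\ref{fig_LLOBSM}(b), where the basis-selection probabilities $p_{XX}$ and $p_{ZZ}$ of each physical LOBSM are fixed in advance rather than chosen adaptively. I will embed an indirect single-qubit measurement of $\hat O_a$ on party $a$ inside a logical BSM following Fig.~\ref{fig_logical_Oa}, by preparing a known eigenstate of the corresponding $\hat O_b$ on party $b$ and reading off $m_{\hat O_a}$ from the joint outcome. As in the adaptive proof, the physical LOBSM only fires when both photons are detected, so I can push all loss onto party $a$ and set $\eta_a' = \eta_a\eta_b$, $\eta_b' = 1$.

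Next I translate the loss-tolerance of each indirect single-qubit measurement into a constraint on $\eta_a'$. For the indirect $X_a$ measurement, only the fraction $p_{XX}$ of surviving pairs yields a usable $X_aX_b$ outcome from a given static BSM, so loss-tolerance requires $\eta_a' p_{XX} \geq 1 - \varepsilon_{X_a}^{\{X_a,Z_a\}}$; symmetrically, $\eta_a' p_{ZZ} \geq 1 - \varepsilon_{Z_a}^{\{X_a,Z_a\}}$. Both must be met simultaneously with the same static choice, giving
\begin{equation*}
    \eta_a' \geq \max\!\left[\frac{1-\varepsilon_{X_a}^{\{X_a,Z_a\}}}{p_{XX}},\; \frac{1-\varepsilon_{Z_a}^{\{X_a,Z_a\}}}{p_{ZZ}}\right].
\end{equation*}
Swapping $a\leftrightarrow b$ yields the same bound on $\eta_b'$, and since $\eta_a'=\eta_a\eta_b$, both conditions collapse to the single product inequality in the statement once the right-hand side is minimized over all feasible code families and LOBSM configurations.

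The final step, which I expect to be the main technical point, is the joint optimization. I will use the LOBSM constraint~\ref{LOBSMc3}, $p_{XX}+p_{ZZ} \leq 1+p$, together with the non-commutativity bound Eq.~\eqref{eq_measurement_comm}, which in the context $\{X_a,Z_a\}$ reads $\varepsilon_{X_a}^{\{X_a,Z_a\}}+\varepsilon_{Z_a}^{\{X_a,Z_a\}} \leq 1$. The worst case for the adversary (the smallest lower bound on $\eta_a'$) should occur when both ratios inside the maximum coincide and both constraints are saturated, i.e.\ $\varepsilon_{X_a}^{\{X_a,Z_a\}}=\varepsilon_{Z_a}^{\{X_a,Z_a\}}=\tfrac12$ and $p_{XX}=p_{ZZ}=(1+p)/2$, yielding the value $1/(1+p)$. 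The main obstacle is verifying rigorously that this balanced point is genuinely a saddle; I plan to do so via a short convexity/monotonicity argument on the maximum of two linear-fractional functions, noting that raising either $p_{XX}$ relaxes one ratio but tightens the other through the sum constraint, so equalizing the two ratios at saturation is optimal.
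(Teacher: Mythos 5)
Your derivation is essentially the one the paper itself sketches before stating the theorem (Eq.~\eqref{eq_absm_det}, generalized from $p=\sfrac12$ to general $p$), and the optimization step you flag as the ``main technical point'' is in fact immediate: writing $u = 1-\varepsilon_{X_a}^{\{X_a,Z_a\}}$ and $v = 1-\varepsilon_{Z_a}^{\{X_a,Z_a\}}$, the mediant inequality gives
\[
\max\left[\frac{u}{p_{XX}},\,\frac{v}{p_{ZZ}}\right] \;\geq\; \frac{u+v}{p_{XX}+p_{ZZ}} \;\geq\; \frac{1}{1+p},
\]
using $u+v\geq 1$ from Eq.~\eqref{eq_measurement_comm} and constraint~\ref{LOBSMc3}; no convexity or saddle-point analysis is needed.

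However, there is a genuine gap: your argument only covers static strategies in which \emph{every} physical LOBSM uses the same configuration, i.e.\ a single scalar pair $(p_{XX},p_{ZZ})$ common to all qubit pairs. A static (non-adaptive) protocol may perfectly well pre-assign \emph{different} configurations $(p_{XX,i},p_{ZZ,i})$ to different pairs $i$ --- for instance biasing half the pairs towards $XX$ and half towards $ZZ$ --- and this is exactly the case the theorem must include; the paper defers the proof to Appendix~\ref{app_static_BSM} precisely to remove the assumption of identical $p_{XX}$ and $p_{ZZ}$ across pairs. In that heterogeneous case your key step collapses: the indirect logical $X_a$ measurement then sees a \emph{non-uniform} effective transmission $\eta_a\eta_b\,p_{XX,i}$ on qubit $i$, and the scalar condition $\eta_a' p_{XX}\geq 1-\varepsilon_{X_a}^{\{X_a,Z_a\}}$ is no longer meaningful, because the threshold $\varepsilon_{X_a}^{\{X_a,Z_a\}}$ is defined for loss acting identically on all physical qubits. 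The paper's proof instead works with vectors $\vec p_{XX},\vec p_{ZZ}$ obeying $\vec p_{XX}+\vec p_{ZZ}\leq (1+p)\vec 1$, identifies $P(X_LX_L\,|\,\eta_a,\eta_b,\vec p_{XX}) = P(X_L\,|\,\eta_a\eta_b\,\vec p_{XX})$, and invokes a vectorized conjugate-coding inequality $P(X_L\,|\,\vec\mu)+P(Z_L\,|\,\vec 1-\vec\mu)\leq 1$ together with componentwise monotonicity: when $\eta_a\eta_b\leq \sfrac{1}{1+p}$ one has $\eta_a\eta_b\,\vec p_{ZZ}\leq \vec 1-\eta_a\eta_b\,\vec p_{XX}$ componentwise, hence $P(X_LX_L)+P(Z_LZ_L)\leq 1$, so the two logical stabilizer measurements cannot simultaneously be made near-deterministic. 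To complete your proof you would need either this vectorized argument or a reduction of the heterogeneous case to the uniform one, and the latter is not obvious.
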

\begin{proof}
    Appendix~\ref{app_static_BSM} contains the general proof of this result, including removing the assumption of taking identical $p_{ZZ}$ and $p_{XX}$ for any physical qubit. 
\end{proof}

The limit case, $p\to 1$, corresponds to a deterministic LOBSM, which also approaches the previous limit for adaptive logical BSMs based only on LOBSMs, thus showing that this limit can also be approached using ancillary state-assisted methods, though at the cost of a large overhead of ancillary photons consumed to perform near-deterministic physical LOBSMs.

\section{Best achievable loss tolerance for BSM with linear optics}
\label{subsec_fundamental_limit_lobsm}

We have shown in Sec.~\ref{sec_BSM} that the fundamental best achievable loss tolerance for logical BSMs is $\sfrac12$ ($\bar \varepsilon^{(BSM)} \leq \sfrac12$ for both $a$ and $b$)  without assuming any specific implementations, and we have seen that we can reach a lower loss-tolerance of $(1 - \varepsilon_a^{(ABSM)}) (1 - \varepsilon_b^{(ABSM)}) > \sfrac12$
using adaptive LOBSM (Th.~\ref{th_absm}).
Can we devise another way of making logical BSMs compatible with linear optics but which reach the fundamental bound? In the following, we show this to be the case.
We will provide a simple example of a logical LOBSM scheme which maximizes the amount of loss tolerated by the laws of physics.

Before doing so, we need first to understand why logical BSMs based on physical LOBSMs fail to reach the fundamental bounds for logical BSMs. Since we are making measurements of the photonic qubits two-by-two using LOBSMs, the measurement of each physical qubit is also conditioned on the detection of its counterpart from the second QECC, hence the appearance of the $(1 - \varepsilon_a) (1 - \varepsilon_b) > \sfrac12$ threshold. Contrarily, measurements of logical $X$ and $Z$ operators are only limited by the $\varepsilon_i < \sfrac12$ threshold ($\forall i \in \{a, b\}$), because they can be, for example, based on single-qubit measurements. The question is therefore whether we can use single-qubit measurements to perform a logical LOBSM.

It is easy to see that this is impossible if we are uniquely using single-qubit measurements, because we would need to measure the logical operators $X$ and $Z$ individually on each code, which is impossible because they are not commuting. However, nothing contradicts the idea of using logical BSMs based on an adaptive combination of both physical LOBSMs and single-qubit measurements as illustrated in Fig.~\ref{fig_LLOBSM}(c), which uses a physical measurement apparatus of the form in Fig.~\ref{fig_LOBSM}(b). The critical idea behind this is that we use some LOBSMs because they are needed for logical BSM measurements. Then, we use single-qubit measurements whenever possible to achieve better loss tolerance. Using this strategy it should be possible to perform LOBSMs with the maximum loss tolerance allowed by the laws of physics, as we illustrate by proposing a simple example in the following.

\begin{theorem}
    The tight fundamental upper-bound for logical BSMs of Th.~\ref{th_bsm} constitutes also a tight upper bound in a linear-optical setting:
    $$\forall \code, \varepsilon^{(LOBSM)} \leq \sfrac12.$$
\end{theorem}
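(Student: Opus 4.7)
The upper bound $\varepsilon^{(LOBSM)} \leq \sfrac12$ is immediate from Theorem~\ref{th_bsm}, since a linear-optical logical BSM on locally-encoded qubits is a strict subclass of the settings covered there. What remains to establish is tightness: we need a linear-optical construction whose loss tolerance approaches $\sfrac12$ on both sides. My plan is to bypass the $(1-\varepsilon_a)(1-\varepsilon_b) \geq \sfrac12$ obstruction of adaptive LOBSMs (Theorem~\ref{th_absm}) by routing the logical BSM through quantum teleportation with a locally-prepared, lossless ancilla logical Bell pair.

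Concretely, I would pick a family $\code$ that saturates Theorem~\ref{th_absm} when one side is lossless, for example the quantum parity codes used in Lee \textit{et al.} At a central station between Alice and Bob, locally prepare a logical Bell state $\ket{\Phi^+_{A'B'}}$ with $A'$ and $B'$ each encoded in $\code$; since this ancilla is generated on-site, $A'$ and $B'$ suffer no channel loss. Alice and Bob transmit their lossy logical qubits $A$ (loss $\varepsilon_a$) and $B$ (loss $\varepsilon_b$) to the central station, which performs two independent adaptive logical LOBSMs: one between $A$ and $A'$, and one between $B'$ and $B$. Using the ancilla stabilizers $X_{A'}X_{B'} = Z_{A'}Z_{B'} = +1$, one deterministically recovers $m_{X_A X_B} = m_{X_A X_{A'}}\, m_{X_{B'} X_B}$ and $m_{Z_A Z_B} = m_{Z_A Z_{A'}}\, m_{Z_{B'} Z_B}$ from the two sub-BSM outcomes.

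By Theorem~\ref{th_absm} applied to each sub-BSM with one lossless side, the thresholds satisfy $(1 - \varepsilon_a)(1 - 0) \geq \sfrac12$ and $(1 - 0)(1 - \varepsilon_b) \geq \sfrac12$, so both sub-BSMs tolerate losses strictly below $\sfrac12$. Combined with the deterministic classical post-processing, the overall scheme realises a logical BSM with loss tolerance arbitrarily close to $\sfrac12$ on each side, matching the upper bound from Theorem~\ref{th_bsm}.

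The step I expect to require the most care is justifying that the local preparation of the ancilla logical Bell state $\ket{\Phi^+_{A'B'}}$ lies within the linear-optical framework considered here. This should be consistent with the paper's Fig.~\ref{fig_logical_2}(b), where local encoders are always available; the construction simply places such an encoder at a central station between the two parties, in line with logical-Bell-pair sources already used in the all-photonic repeater literature. A minor technical subtlety to dispatch is the Pauli-correction bookkeeping across the two sub-BSM branches, which follows from the stabiliser identity above together with the standard logical teleportation identity.
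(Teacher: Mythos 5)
Your proposal is correct, but it proves tightness by a genuinely different construction than the paper. The paper's proof (Sec.~\ref{subsec:optimalcodeexample}) builds an explicit code --- a QPC($n$,2) variant where each block pairs one bare photon with one qubit encoded in an inner code having $\varepsilon_{X}^{\{X,Z\}}=\varepsilon_{Z}^{\{X,Z\}}=\sfrac12$ (surface code or tree graph state) --- and measures it adaptively: physical LOBSMs on the bare photons (only one of which must succeed), then logical single-qubit $X$ or $Z$ measurements on the inner-encoded qubits depending on the LOBSM outcomes. Its conceptual point is that single-qubit measurements escape the $\eta_a\eta_b$ joint-detection penalty that drives Th.~\ref{th_absm}. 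Your scheme escapes the same penalty differently: because the ancilla pair $A'B'$ is lossless and station-side, every physical LOBSM pairs one lossy photon with one lossless photon rather than straddling the $a$--$b$ cut, so the loss-transfer argument ($\eta_a'=\eta_a\eta_b$, $\eta_b'=1$) underlying Th.~\ref{th_absm} never applies; the threshold $\varepsilon_a<\sfrac12$, $\varepsilon_b<\sfrac12$ is then inherited from the saturating family of Lee \textit{et al.}~\cite{Lee2019} with one transmission set to unity, and your stabilizer bookkeeping ($m_{X_AX_B}=m_{X_AX_{A'}}m_{X_{B'}X_B}$, etc.)\ is the standard entanglement-swapping identity and is exact. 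What the paper's route buys: it is self-contained (no entangled ancillas inside the measurement apparatus, no reliance on \cite{Lee2019} for achievability) and it isolates the role of single-qubit measurements, which is the organizing message of Table~\ref{table_res}. What yours buys: modularity (the sub-BSMs are black boxes) and the interesting observation that station-side entangled logical ancillas can substitute for single-qubit measurements in reaching $\sfrac12$. Two points to tighten: first, Th.~\ref{th_absm} is only an upper bound, so writing ``by Theorem~\ref{th_absm}\ldots the thresholds satisfy $(1-\varepsilon_a)(1-0)\geq\sfrac12$'' is not what does the work --- what you need, and do invoke earlier, is the \emph{tightness} of that bound, i.e.\ a family whose sub-BSM succeeds with probability approaching one whenever the product of transmissions exceeds $\sfrac12$; second, your tightness claim is conditional on the framework admitting lossless logical-Bell-pair preparation inside the measurement device, which should be flagged as an assumption --- it is consistent with the paper (the decoder of Fig.~\ref{fig_LO_decoder} uses exactly this resource, and preparation is offline so it does not touch the threshold), but it is a resource the paper's own construction does not require.
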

We will prove this result by finding an example of a logical BSM reaching this loss tolerance based on linear optics.

\subsection{Example of logical BSM with maximum loss-tolerance}
\label{subsec:optimalcodeexample}

\begin{figure}[tb]
    \centering
    \includegraphics[width=0.8\columnwidth]{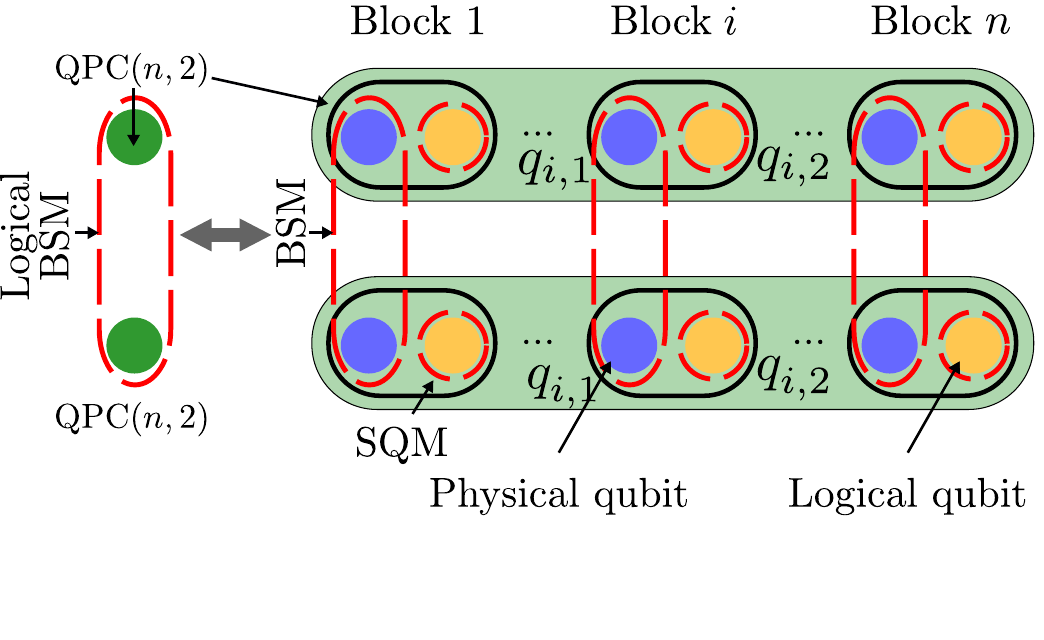}
    \caption{Variant of a QPC$(n,2)$ code with a logical LOBSM.}
    \label{fig_example}
\end{figure}

In this section, we propose a class of codes together with a scheme for logical LOBSMs based on adaptive LOBSMs and single-qubit measurements which reach the best achievable loss tolerance. We should emphasize that our objective in this section is to prove the tightness of the fundamental upper bound of logical LOBSMs. Therefore, we use this class of codes mostly as a theoretical tool to prove this tightness analytically. It also serves pedagogical purposes as it illustrates with a simple example how we can perform a loss-tolerant logical LOBSM based on a QECC and a combination of optical two-qubit and single-qubit measurements. 
Nevertheless, this code is inefficient in terms of resources required and is of very limited practical utility for implementations. Yet, we expect a similar approach with other codes to produce the same threshold  while being practical. For example, numerical evidence shows that tree graph codes~\cite{Hilaire2021error} have the same loss tolerance thresholds.

Our example is based on a variant of the quantum parity code QPC($n,m$) consisting of $n$ blocks of $m$ qubits each~\cite{Ewert2016}. Figure~\ref{fig_example} illustrates such a code. Quantum parity codes are a generalization of the well-known Shor code~\cite{Shor1995}, which is the special case corresponding to QPC($3,3$). In particular, we will focus on QPC($n,2$) composed of $n$ blocks of $2$ qubits, where, for each block $i = 1,..., n$, the first qubit $q^{}_{1,i}$ is simply a physical photonic qubit, and the second qubit $q^{}_{2,i}$ is itself encoded logically using a QECC with loss-tolerance thresholds for $X$ and $Z$ of $\varepsilon_{X}^{\{X, Z\}} = \varepsilon_{Z}^{\{X, Z\}} = \sfrac12$, such as surface codes or tree graph states~\cite{Varnava2006}. We want to show that we can construct a logical LOBSM on two logically-encoded qubits, based on this variant of the QPC($n,2$), whose loss-tolerance threshold is exactly the one of the single-qubit QECC used for the qubits $q^{}_{2, i}$.

For a quantum parity code, a logical $X$ measurement corresponds to the successful $X$-measurement of all qubits from at least one single block, 
and a logical $Z$ measurement corresponds to the $Z$ measurement of at least one qubit in each block.
A logical BSM therefore corresponds to the measurement of the logical operators $X_{a} X_{b}$ and  $Z_{a} Z_{b}$:
\begin{gather} 
    \forall i,j=1,...,n, \quad  X_{a} X_{b} = X^{a}_{i,1}X_{j,1}^{b} X^{a}_{i,2}X_{j,2}^{b}  \\
    \forall k_1, ..., k_n, l_1, ..., l_n, \in \{1,2\},\quad Z_{a} Z_{b} = \prod_{j=1}^n Z^{a}_{j,k_j} Z_{j,l_j}^b,
    \label{eq_bsm}
\end{gather}
where $Z^{a}_{i,j}$ and $X^{a}_{i,j}$ (respectively $Z_{i,j}^b$ and $X_{i,j}^b$) correspond to $Z$ and $X$ operators on the qubit $q^{a}_{i,j}$ (resp. $q^{b}_{i,j}$) from logical qubit $a$ (resp. $b$).
The logical BSM with record loss-tolerance is illustrated in Fig.~\ref{fig_example} and detailed in the following.

On each block, $q^{a}_{1,n}$ and $q^{b}_{1,n}$ are measured jointly with a physical LOBSM. $q^{a}_{2,n}$ and $q^{b}_{2,n}$ are both measured individually in the $X$ basis if the BSM on $q^{a}_{1,n}$ and $q_{1,n}^{b}$ has succeeded;  
otherwise  they are both measured the $Z$ basis. A logical BSM then succeeds if the following two conditions are satisfied:
\begin{enumerate}
    \item at least one BSM succeeds, \label{logBSMc1}
    \item all the single-qubit measurements succeed. \label{logBSMc2}
\end{enumerate}
Indeed, as shown in Eq.~\eqref{eq_bsm}, supposing that both the BSM on $q^{a}_{1,i}$ and $q_{1,i}^{b}$ and the $X$ measurements on $q^{a}_{2,i}$ and $q_{2,i}^{b}$ have succeeded, allows us to retrieve $X_{a} X_b$ (through the $X$ measurements of all the qubits from a block). 
Moreover, from the BSM on $q^{a}_{1,i}$ and $q_{1,i}^{b}$, we have access to $Z^{a}_{1,i} Z_{1,i}^b$ too. With that information and with all the individual $Z$ measurements on all the other blocks (or the joint $ZZ$ measurements if the corresponding BSM has succeeded), we can thus retrieve also $Z_{a} Z_b$ and thus implement a complete logical BSM.

At least one out of the $n$ BSMs succeeds (condition \ref{logBSMc1}) with probability $1 - (1 - (1-\varepsilon_a) (1-\varepsilon_b) / 2)^n$, which can be made arbitrarily close to 1 by increasing the number of blocks $n$. In addition, we use an encoding with sufficiently strong loss tolerance so that all the $2n$ single-qubit measurements succeed (condition \ref{logBSMc2}) with arbitrarily high probability. This is possible as long as $\varepsilon_i<\min[\varepsilon_{X}^{\{X, Z\}}, \varepsilon_{Z}^{\{X, Z\}}]$ for $i = a, b$. Since QECCs fulfilling this condition exists with a loss threshold $\sfrac12$, we conclude that by opting for this encoding, we obtain a logical LOBSM with maximum loss-tolerance reaching $\sfrac12$.

\subsection{Loss-tolerant linear-optical decoder}

We show now that we can design a linear-optical decoder with maximum loss-tolerance $\varepsilon_\decode = \sfrac12$, bounded by the fundamental limits of Th.~\ref{th_no_cloning_bound}.
The architecture of this decoder is based on quantum teleportation and is already known, it was for example investigated in~\cite{Ewert2016, Lee2019}. However, given the new results from the previous subsection, it can now operates for a larger amount of internal losses in the decoder.

\begin{corollary}
    $\forall \code$, the loss-tolerance of a linear-optical decoder, $\varepsilon_\decode^{(LO)}$, has a tight upper bound of $\sfrac12$.
\end{corollary}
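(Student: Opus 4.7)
The plan is to combine the no-cloning upper bound with an explicit linear-optical construction achieving it. The inequality $\varepsilon_\decode^{(LO)} \leq \sfrac12$ is immediate from Theorem~\ref{th_no_cloning_bound}: a linear-optical decoder is a particular type of decoder, and hence inherits the fundamental no-cloning bound with no extra work. Tightness is where the substance lies, and here I would leverage the logical LOBSM construction from Sec.~\ref{subsec:optimalcodeexample}.

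For the construction, I would use a teleportation-based decoder of the kind already analysed in Refs.~\cite{Ewert2016, Lee2019}. The decoder locally prepares (and therefore losslessly) a logical Bell pair whose two halves live on ancillary systems $b$ and $b'$; party $b$ is encoded in the same QECC $C$ as the input, while $b'$ is a bare physical photonic qubit. On receiving the lossy encoded state on party $a$, the decoder runs the logical LOBSM between $a$ and $b$ from Sec.~\ref{subsec:optimalcodeexample} and applies to $b'$ the Pauli byproduct deduced from the joint $X_{aC}X_{bC}$ and $Z_{aC}Z_{bC}$ outcomes. By the standard teleportation identity, the output on $b'$ is the original $\ket{\psi}$ whenever the logical BSM succeeds.

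The loss threshold of the resulting decoder equals the loss threshold of this logical LOBSM with losses concentrated on side $a$ alone. Since the QPC$(n,2)$-variant construction tolerates up to $\sfrac12$ loss on each side independently, and all its elementary operations --- adaptive physical LOBSMs, single-photon $X$ or $Z$ measurements, and classical feed-forward --- are linear-optical, the decoder inherits a $\sfrac12$ threshold, matching the upper bound. The only subtlety that warrants care is verifying that the Pauli correction is fully determined: this requires the logical BSM to supply both the $X_{aC}X_{bC}$ and $Z_{aC}Z_{bC}$ outcomes rather than merely a success flag, which is exactly what the two success conditions in Sec.~\ref{subsec:optimalcodeexample} guarantee. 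I expect no genuine obstacle beyond this bookkeeping.
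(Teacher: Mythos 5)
Your proposal is correct and follows essentially the same route as the paper: the upper bound is inherited from Theorem~\ref{th_no_cloning_bound}, and tightness is shown via a teleportation-based decoder in which the input and one half of a locally prepared logical Bell pair are measured by the maximally loss-tolerant logical LOBSM of Sec.~\ref{subsec:optimalcodeexample}, with the Pauli byproduct applied to the remaining bare physical qubit. Your added bookkeeping --- that the logical BSM must yield both the $X_{aC}X_{bC}$ and $Z_{aC}Z_{bC}$ outcomes to fix the correction, and that losses are effectively concentrated on the input side --- is a slightly more careful statement of what the paper's proof asserts implicitly.
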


\begin{figure}[!ht]
    \centering
    \includegraphics[width=1\columnwidth]{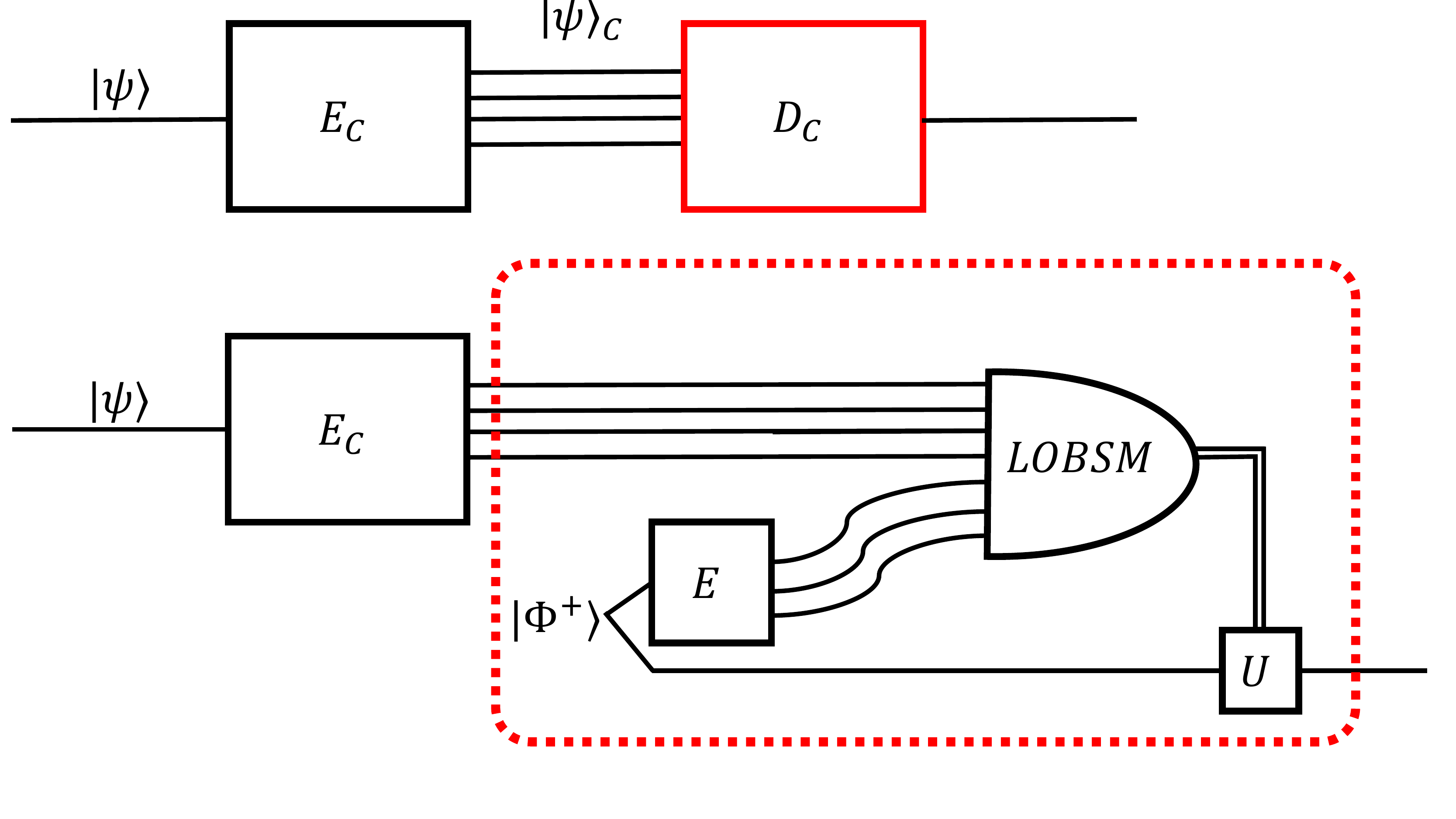}
    \caption{A linear-optical decoder based on quantum teleportation. The logical LOBSM can be any of the one depicted in Fig.~\ref{fig_LLOBSM}. The unitary $U$ corresponds to the corrections that need to be applied to the qubit depending on the LOBSM outcome.}
    \label{fig_LO_decoder}
\end{figure}

\begin{proof}

Since we are processing quantum information only using linear-optical components and detectors, i.e.\@ only destructive photon measurements and probabilistic two-qubit gates, designing such a decoder is not as straightforward as without the linear-optical constraints.

To build such a linear-optical decoder, we need not only to measure the state (with destructive measurements) but also to recover it. This is typically how quantum teleportation works. We consider a quantum state $\ket{\psi}_{a}$ embedded on a physical qubit $a$, and two other qubits $b$ and $c$ prepared in a Bell state $\ket{\Phi^+}_{bc}$. By performing a physical BSM onto qubits $a$ and $b$, the quantum state of qubit $c$ after measurement is projected into the quantum state $\ket{\psi}_c$, up to some known single-qubit gates depending on the BSM outcome. Here, we see that in a linear-optical setting, the qubit $a$, which was the initial support of the quantum state $\ket{\psi}$ is not existing anymore but the quantum state has been maintained and transferred onto qubit $b$ during the quantum teleportation.

However, acting on physical qubits, the quantum teleportation scheme described previously is not loss-tolerant. Yet, we can easily convert it into a loss-tolerant linear-optical decoder by encoding qubits $a$ and $b$ logically and by replacing the physical BSM with a logical LOBSM. This is illustrated in Fig.~\ref{fig_LO_decoder}. The success of this decoder depends on the success of the logical LOBSM, which can be performed with loss tolerance of $\sfrac12$. Therefore, the upper bound for linear-optical decoder is also $\sfrac12$. Note that this bound is tight since we have already found a logical LOBSM with similar tight upper bound.
\end{proof}

We should also note that this is at the core of most all-photonic quantum repeater schemes~\cite{Ewert2016, Lee2019, Hilaire2021error, Niu2022, Bell2022}. These protocols are based on performing LOBSMs on logical Bell states to propagate a quantum state through a lossy channel. In that case, we just need to also logically encode qubit $c$ in our linear-optical decoder.

Moreover, with our framework and our proposal for a linear-optical decoder based on quantum teleportation, we straightforwardly find lower bounds for the best logical LOBSMs solely based on adaptive or static physical LOBSMs.

\begin{corollary}
    $\forall \code$, the fundamental limits for linear-optical decoders based on adaptive and static physical LOBSMs are respectively:
    $$
    \varepsilon_\decode^{(ABSM)} \leq \sfrac12, \; {\rm and} \; 
    \varepsilon_\decode^{(SBSM)} \leq \sfrac{1}{1+p}.
    $$
\end{corollary}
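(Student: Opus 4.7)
The plan is to transport the two LOBSM upper-bound results of Section~\ref{sec_LOBSM} through the teleportation-based linear-optical decoder of Fig.~\ref{fig_LO_decoder}, in direct analogy with the preceding corollary but now restricting the logical LOBSM inside the decoder to either an adaptive or a static implementation. Since the decoder succeeds exactly when its internal logical LOBSM between the lossy input half $a$ and the internally-prepared Bell half $b$ succeeds, any impossibility bound on LOBSMs of the chosen class transfers verbatim into an impossibility bound on $\varepsilon_\decode$ for a decoder built from that class.

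Concretely, I would split the argument into the ABSM and SBSM cases and treat them in parallel. For the adaptive case I would feed Theorem~\ref{th_absm}, namely $(1-\varepsilon_a^{(ABSM)})(1-\varepsilon_b^{(ABSM)}) \geq \sfrac12$, into the teleportation scheme. Because the internal half $b$ is prepared on-site and can (via heralded resource-state generation) be made effectively lossless without changing the input threshold, the constraint reduces to $\varepsilon_a \leq \sfrac12$, giving $\varepsilon_\decode^{(ABSM)} \leq \sfrac12$. For the static case I would similarly inject Eq.~\eqref{eq_lobsm_static_p}, namely $(1-\varepsilon_a^{(SBSM)})(1-\varepsilon_b^{(SBSM)}) \geq \sfrac{1}{1+p}$, and use the same $\varepsilon_b \to 0$ specialization to extract the bound on $\varepsilon_\decode^{(SBSM)}$ claimed in the corollary statement.

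The main obstacle I see is justifying the specialization $\varepsilon_b \to 0$: one has to argue that photon loss on the $b$ side of the internal Bell pair can be absorbed into classical resource-state preparation overhead (heralding the preparation and retrying on failure) without relaxing the single-input loss rate $\varepsilon$ that the decoder actually tolerates. This is precisely the implicit step already used in the preceding corollary for the unrestricted LOBSM case, and once it is granted both inequalities follow without any new code construction; tightness of each bound is then inherited from the adaptive and static LOBSM constructions of Sections~\ref{sec_adapt_LOBSM} and~\ref{subsec_static_LOBSM} that already saturate their respective LOBSM theorems.
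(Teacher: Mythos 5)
Your strategy is exactly the paper's own: the paper offers no separate proof of this corollary beyond the sentence introducing it, the intended argument being precisely what you describe --- feed the class-restricted logical-BSM bounds (Theorem~\ref{th_absm} for the adaptive case, Eq.~\eqref{eq_lobsm_static_p} for the static one) into the teleportation decoder of Fig.~\ref{fig_LO_decoder}, with the internally prepared half $b$ lossless, and inherit tightness from the constructions that saturate those bounds. One refinement: for the impossibility direction you do not need the heralding argument you worry about. Internal loss on $b$ can only tighten the constraint, since $\eta_b \leq 1$ gives $\eta_a \eta_b \leq \eta_a$, so $(1-\varepsilon_a)(1-\varepsilon_b)\geq t$ already forces $1-\varepsilon_a \geq t$; heralded preparation of $b$ (or the paper's observation, stated right after the corollary, that a physical LOBSM behaves identically for transmissions $(\eta_a,\eta_b)$ and $(\eta_a\eta_b,1)$) matters only for achievability.

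The genuine unchecked step is the static case. You assert that the $\varepsilon_b \to 0$ specialization of Eq.~\eqref{eq_lobsm_static_p} ``extracts the bound claimed in the corollary statement'', but carrying it out gives $1-\varepsilon_\decode^{(SBSM)} \geq \sfrac{1}{1+p}$, i.e.\ $\varepsilon_\decode^{(SBSM)} \leq \sfrac{p}{1+p}$, which is \emph{not} the displayed $\varepsilon_\decode^{(SBSM)} \leq \sfrac{1}{1+p}$ (for $p=\sfrac12$: $\sfrac13$ versus $\sfrac23$). As literally written, the corollary's static bound is weaker than the no-cloning bound $\sfrac12$ of Theorem~\ref{th_no_cloning_bound} and so carries no new information; the right-hand side $\sfrac{1}{1+p}$ is evidently meant as a bound on the transmission $1-\varepsilon_\decode^{(SBSM)}$, a slip that the adaptive case hides because $\sfrac12$ is its own complement. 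Your derivation is sound and in fact proves the intended (stronger) statement, but a finished write-up must either restate the result as $\varepsilon_\decode^{(SBSM)} \leq \sfrac{p}{1+p}$ or explicitly flag the loss-versus-transmission reinterpretation; claiming that the computation lands on $\sfrac{1}{1+p}$ for the loss itself would be an arithmetic error.
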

Interestingly, we can find codes for which $\varepsilon_\decode^{(ABSM)} = \varepsilon_\decode^{(LO)}$, this is a consequence of the fact that a physical LOBSM on qubit $a$ and $b$ works with the same probability for transmissions $\eta_a, \eta_b$ and for transmission $\eta_a'=\eta_a \eta_b, \eta_b'=1$ (see Fig.~\ref{fig_LOBSM}).

\section{Overview of the results} \label{sec_overview}

In the previous sections, we have proposed a framework to find fundamental bounds on the overall loss-tolerance thresholds and the loss-tolerance of some operator measurements for general codes. We applied this framework to investigate the fundamental loss-tolerance bounds of BSMs on logical qubits. We have first derived that while it is possible, in principle, to perform a logical BSM in the presence of arbitrarily high qubit loss on qubits within the same QECC, the codes where this is possible may not be of practical interest. Indeed, these QECCs do not exhibit loss tolerance for decoding and thus a quantum state cannot be protected against loss using such a QECC. Moreover, the BSM should act on qubits which are logically encoded within the same QECC and thus cannot be prepared remotely, potentially reducing their interest e.g.\@ for long-distance communications.

We then proceeded by showing that BSMs on logical qubits encoded remotely have a fundamental loss-tolerance of at best $\sfrac12$, corresponding to the fundamental limit imposed by the measurement postulate. Furthermore, we derived loss-thresholds in different contexts where we focus on photonic implementation and linear-optical quantum information processing. We showed that:
\begin{itemize}
\item for static logical LOBSMs, the loss probabilities, $\varepsilon_a, \varepsilon_b,$ of physical qubits within each code $a$ and $b$ should meet the condition $(1 - \varepsilon_a) (1 - \varepsilon_b) > \sfrac23$,
\item for logical BSMs based on adaptive physical LOBSMs, the loss-tolerance condition increases to $(1 - \varepsilon_a) (1 - \varepsilon_b) > \sfrac12$,
\item for logical BSMs based on adaptive physical LOBSMs and single-qubit measurements, we reach the fundamental bounds of logical BSMs: $\max[\varepsilon_a, \varepsilon_b]< \sfrac12$.
\end{itemize}

Furthermore, we have shown the tightness of each of these bounds. In particular, the last bound may look surprising given that Lee et al.~\cite{Lee2019} stated that the best achievable loss tolerance for LOBSMs should be $(1 - \varepsilon_a) (1 - \varepsilon_b)>\sfrac12$. Our framework helps to understand why this limit can be overcome by highlighting that the authors made one hypothesis for the derivation of this result which turned out to be incomplete. This hypothesis was that a logical LOBSM should always be decomposed onto physical LOBSMs. We provide a simple counter-example which shows that this is not always the case, and that a better loss-tolerance is achievable.
An overview of all the results is displayed in Table.~\ref{table_res} and in Fig.~\ref{fig_perf} (a). Moreover, we have also shown that the results in terms of loss-tolerance for logical BSMs directly translate into a loss-tolerant decoder with similar performances, using a logical quantum teleportation scheme.

\begin{figure*}[tb]
    \centering
    \includegraphics[width=1.7\columnwidth]{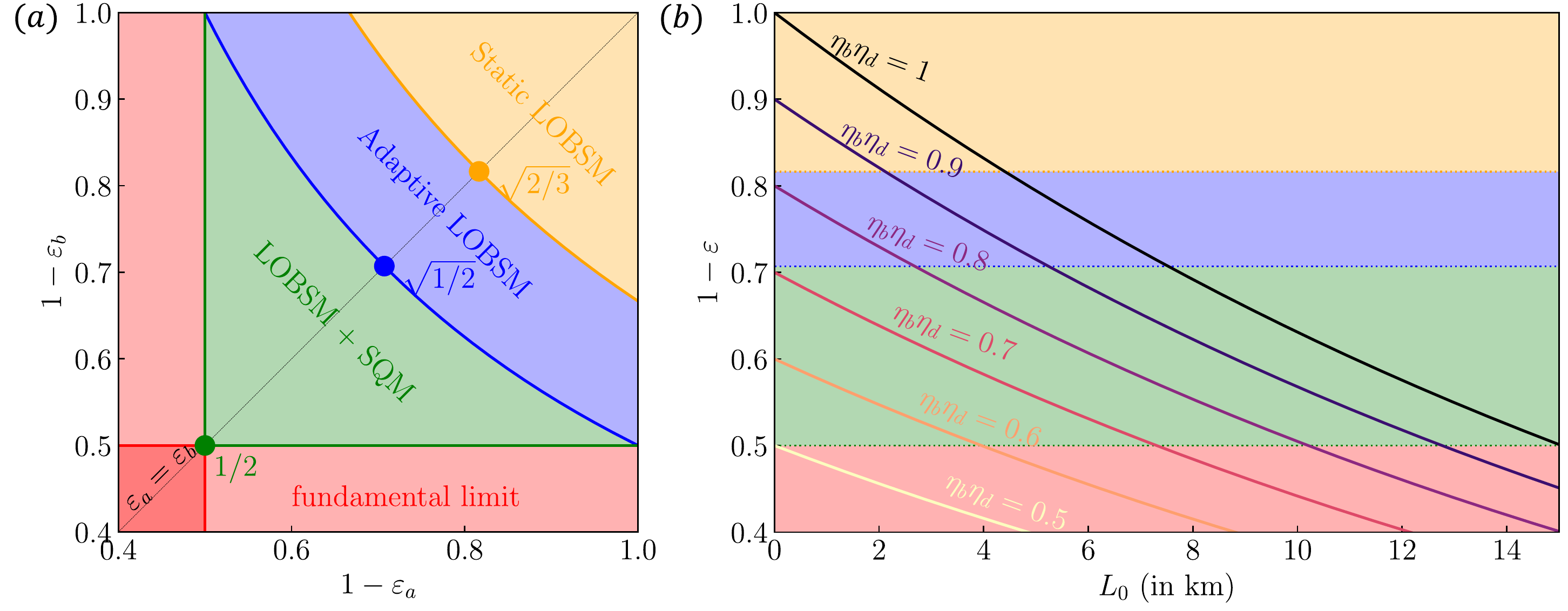}
    \caption{(a) Summary of the fundamental bounds for logical BSMs in a linear-optical setting. The best loss tolerance is achieved for schemes using LOBSM and single-qubit measurements (SQM). (b) overall single photon detection probability $1 - \varepsilon$ for a fiber of distance $L_0$ (in km) given some source ($\eta_b$) and detection ($\eta_d$) efficiencies. Fundamental bounds are taken for $\varepsilon_a = \varepsilon_b = \varepsilon$.}
    \label{fig_perf}
\end{figure*}

\begin{table}[tbp]
    \centering
    \begin{ruledtabular}
    \begin{tabular}{lccc}
        \multicolumn{1}{c}{\multirow{2}{4em}{}} & \textbf{Static} & \multicolumn{2}{c}{\textbf{Adaptive \phantom{xx}}} \\
        \textbf{Physical BSMs}& \textbf{BSM} & \textbf{BSM} & \textbf{BSM + SQM} \\
         \hline
        LOBSM ($p=\sfrac12$)& $1 - \sqrt{\sfrac{2}{3}}$ & $1 -\sfrac{1}{\sqrt{2}}$ & $\sfrac{1}{2}$ \\
        LOBSM (assisted) & $1 -\sfrac{1}{\sqrt{1 + p}}$ & $1 - \sfrac{1}{\sqrt{2}}$ & $\sfrac{1}{2}$ \\
        LOBSM (assisted $p\to1$) & $1 - \sfrac{1}{\sqrt{2}}$ & $1 - \sfrac{1}{\sqrt{2}}$ & $\sfrac{1}{2}$ \\
        Deterministic  & $\sfrac{1}{2}$ & $\sfrac{1}{2}$ & $\sfrac{1}{2}$ \\
        \end{tabular}
        \end{ruledtabular}
        \caption{Table of results (for equal losses $\varepsilon_a = \varepsilon_b$). Fundamental loss thresholds using standard LOBSMs, ancillary-state-assisted LOBSMs (with success probability $p$) and for deterministic BSMs (not using linear-optics). We consider the case where the logical measurements are performed using a static or an adaptive protocol and in the case of adaptive measurements, when we perform either BSMs only or a combination of BSMs and single-qubit measurements (SQM).}
    \label{table_res}
\end{table}

We now discuss the importance of these fundamental bounds. 
Improving the loss-tolerance of QECCs, and in particular logical BSMs is particularly important for photonic implementations where losses are the main source of errors.

Our results clarify the requirements a logical BSM scheme should fulfill to reach optimal loss tolerance using linear optics. In particular, we show that such a logical BSM should first exploit feed-forward. Otherwise, its loss tolerance is limited by the orange region in Fig.~\ref{fig_perf}(a). Moreover, a logical BSM with maximal loss tolerance should also use both physical LOBSMs and single-qubit measurements. Otherwise, its performance is necessarily limited to the blue and orange regions in Fig.~\ref{fig_perf}(a). We hope highlighting these two necessary conditions will help devise better photonic logical BSMs.
Moreover, the loss-tolerance threshold can be seen as the ``loss budget'' for a practical photonic implementation. The overall photon transmission should include every operation from its generation, its use, and its detection. Therefore, improving the loss-tolerance is crucial to increase the number of operations that we can perform onto a photon before it should be detected.
The new logical LOBSM loss-threshold is thus a crucial improvement since the brightness $\eta_b$ of state-of-the-art single-photon sources~\cite{Kaneda2019, Langenfeld2020, Bhaskar2020, Tomm2021, Thomas2021} and the single-photon detector efficiency $\eta_d$ are sufficiently high to reach this new regime of loss tolerance but not the previous one: $1/2 < \eta_b \eta_d < 1 / \sqrt{2}$.

For quantum communication applications, this result could improve the implementation of quantum repeater protocols which aim at enabling long-distance quantum communications. Indeed, the most advanced quantum repeater schemes are based on quantum error correction, including the newly investigated all-photonic quantum repeater protocols~\cite{Azuma2015, Ewert2016, Lee2019,Hilaire2021error,Hilaire2021resource, Zhang2022, Niu2022, Bell2022}. For these QECCs, the principal objective is to enable communications through a lossy channel, such as a telecom fiber, with typical attenuation of $0.2$dB/km, 
leading to losses $\sfrac12$ after a distance $L_{\sfrac12}\simeq \sfrac{3\ \mathrm{dB}}{0.2\ \mathrm{dB}\cdot\mathrm{km}^{-1}} = 15\ \mathrm{km}$. 
The loss-tolerance threshold of the QECC used imposes a lower bound on the distance between two repeater nodes. Indeed, if the overall single-photon detection loss (including generation, fiber transmission, and detection) is above the BSM loss threshold the quantum repeater scheme cannot work. As a result, a quantum repeater scheme based on a QECC with loss threshold $\varepsilon_{BSM}$, can work only if the fiber transmission efficiency $\eta_t(L)$ is above $(1 - \varepsilon_{BSM}) / (\eta_b \eta_d)$.
In Fig.~\ref{fig_perf}(b), we show how stringent this requirement can be for imperfect photon sources and detectors ($\eta_b \eta_d < 1$). For example, if $\eta_b \eta_d = 0.8$, a repeater scheme using logical BSMs based on:
\begin{itemize}
    \item static LOBSMs cannot work.
    \item adaptive physical LOBSMs requires a very small maximum internode distance of $L_O\approx 2$km.
    \item adaptive physical LOBSMs and single-qubit measurements requires a greater maximum internode distance of  $\approx 10$km.
\end{itemize}
The latter is thus much more practical for implementations.
Here, we focused on quantum communication applications because the internode distance of quantum repeater schemes is a simple and intuitive number to maximize. However, we expect this result to be also of practical interest for quantum computing applications, particularly for FBQC~\cite{Bartolucci2021}. Indeed, any operations made using photonic integrated circuits are lossy, including delay lines. Therefore, having greater tolerance to losses should increase the number of operations that can be performed on a qubit before it is measured.

We should emphasize that the results that we derived for linear-optics quantum information processing were obtained with a simplified description where we have only considered single-qubit measurements and probabilistic BSMs.
By considering more general linear-optical operations, better loss-tolerance thresholds may potentially be found. However, the upper bound of $\sfrac12$ on linear-optical logical BSMs and decoders reach the more general upper bounds imposed by quantum mechanics, and thus cannot be improved.
Given the proof of the upper bound for adaptive logical BSMs based only on LOBSMs, $\varepsilon_{BSM}^{(ABSM)}$, in~\cite{Lee2019} shows that this upper bound is also fundamental. The only improvement that may be found is in the static LOBSM setting if we allow more general linear-optical information processing. Proving whether this would be the case or not could be an interesting extension of this work. More generally, having a deeper understanding of the linear-optical methods to process photonic qubits would always be useful to find efficient ways to process quantum information encoded onto photonic qubits, not only in terms of loss thresholds but also in terms of the amount of resources used.

Furthermore, the framework that we derived could be extremely useful in itself if we can extend it to other types of errors such as operation errors. This could greatly help the derivation of better linear-optical schemes for quantum information processing. Another aspect that is omitted here but which is of important practical interest is an estimation of the resource overhead, i.e. the number of physical qubits used in these codes. 
Indeed, the current analysis only deals with loss threshold but doesn't say how many qubits need to be used to reach a satisfactory success rate. It would be also extremely valuable to investigate in a future work the resource overhead induced by the probabilistic LOBSMs compared to deterministic schemes which are not based on linear-optics.

\begin{acknowledgments}
    We thank Shane Mansfield for interesting and fruitful discussions.
    FG and PH acknowledge funding from the Plan France 2030 through the ANR-22-PETQ-0006 NISQ2LSQ project.
    FG acknowledges support of the ANR through the ANR-17-CE24-0035 VanQute project.
    PH is grateful for support from the grant BPI France Concours Innovation PIA3 projects DOS0148634/00 and DOS0148633/00 – Reconfigurable Optical Quantum Computing.
    PH and SEE acknowledge support by the EU Horizon 2020 programme (GA 862035 QLUSTER).
    SEE also acknowledges the Virginia Commonwealth Cyber Initiative (CCI).
    EB acknowledges NSF grant no. 2137953, QuIC-TAQS.  
\end{acknowledgments}

\appendix

\section{Proof of the $\sfrac{1}{1+p}$ threshold for static LOBSM}~\label{app_static_BSM}

To prove this result we consider the case where a physical LOBSM succeed with intrinsic probability $p$ (multiplied by $\eta_a \eta_b$ to account for losses), therefore, the best success probability for physical $XX$ and $ZZ$ measurements on qubits labelled by $i$ are linked by $p_{XX,i} + p_{ZZ, i} \leq 1 + p$. To complete the proof of the main manuscript, we should also consider the case where the probabilities $p_{XX, i}$ and $p_{ZZ, i}$ vary for different pair of qubits $i$ being measured. Since we are considering static LOBSM, we should consider vectors $\vec p_{XX} = (p_{XX,1}, p_{XX,2}, ..., p_{XX,n})^T$ and $\vec p_{ZZ} = (p_{ZZ,1}, p_{ZZ,2}, ..., p_{ZZ,n})^T$ which follows the relation $\vec p_{XX} + \vec p_{ZZ} \leq (1 + p) \vec 1$. Including single qubit loss ($\eta_a = 1 - \varepsilon_a$, $\eta_b = 1 - \varepsilon_b$), we have the relation 
$$\vec p_{XX} \eta_a \eta_b + \vec p_{ZZ}  \eta_a \eta_b \leq (1 + p) \eta_a \eta_b \vec 1$$
where $\vec 1$ is just a vector of ones with length the number of qubit pair $n$.

Using the relations with Eq.~\eqref{eq_measurement_comm}, we find that $P(X_L X_L | \eta_a, \eta_b, \vec p_{XX}) = P(X_L | \vec p_{XX} \eta_a \eta_b)$ (and similarly for $Z$). In that case, we consider different probabilities for the measurement of each operator on each qubits by using this vector representation.
We can derive a similar relation for single logical qubit operator measurements: $P(X_L | \eta_a \eta_b \vec p_{XX}) + P(Z_L | \vec 1 - \eta_a \eta_b \vec p_{ZZ}) \leq 1$.

We show by contradiction the threshold for static linear optics with $p$ physical LOBSM success probability to be 
\begin{equation}
    (1 - \varepsilon_{A}^{(SBSM)}) (1 - \varepsilon_{B}^{(SBSM)}) \geq \sfrac{1}{1+p}.    
\end{equation}

If we have single photon detection probabilities following the condition $ \eta_a \eta_b \leq 1/ (1+p)$, then $\vec p_{ZZ} \eta_a \eta_b  \leq \vec 1 - \eta_a \eta_b \vec p_{XX}$ and thus the previous relation imply that
$$
P(X_L X_L | \eta_a, \eta_b, \vec p_{XX}) + P(Z_L Z_L | \eta_a, \eta_b, \vec p_{XX}) \leq 1
$$
Therefore, we cannot perform a complete logical BSM measurement, which conclude that the loss threshold is indeed bounded as in Eq.~\eqref{eq_lobsm_static_p}. Moreover, we can easily show that this limit is tight using the same example of a surface code as in the main manuscript using physical LOBSM with success probability $p$.

\bibliographystyle{ieeetr}
\bibliography{bib}

\end{document}